\newcommand{\PaperVersion}{arxiv} 
\newcommand{\OnlyAaai}[1]{\ifthenelse{\equal{\PaperVersion}{aaai}}{#1}{}}
\newcommand{\OnlyArxiv}[1]{\ifthenelse{\equal{\PaperVersion}{arxiv}}{#1}{}}
\newcommand{\tagLabel}[2]{\tag{\textbf{#1}}\label{#2}}
\newcommand{\rewa}[1] {\mathsf{good}_{#1}  }
\newcommand{\punis}[1] {\mathsf{bad}_{#1}  }
\newcommand{\st}{S}
\newcommand{\desbaseset}[1]{D_{#1}^+ }
\newcommand{\undesbaseset}[1]{D_{#1}^- }
\newcommand{\atmset}{\ensuremath{\mathit{Atm}}\xspace}
\newcommand{\PROP}{\mathit{Atm}}
\newcommand*{\prop}{p}
\newcommand{\classbelbase}{\mathbf{M} }
\newcommand{\classbelbaseuniv}{\mathbf{M} }
\newcommand{\lang}{\ensuremath{\mathcal{L}}}
\newcommand{\langdyn}{\lang_{\mathsf{dyn} } } 
\newcommand{\langprog}{\ensuremath{\mathcal{L}}_{\mathsf{prog} } }
\newcommand{\union}{\cup}
\newcommand{\set}[1]{{\{#1\}}}
\newcommand{\suchthat}{:}
\newcommand{\bigand}{\bigwedge}
\newcommand{\lbigand}{\bigand}
\newcommand{\bigor}{\bigvee}
\newcommand{\limply}{\rightarrow}
\newcommand{\lequiv}{\leftrightarrow}
\newcommand{\agtset}{\ensuremath{\textit{Agt}}\xspace}
\newcommand{\AGT}{\mathit{Agt}}
\newcommand{\expbel}[1] {\triangle_{#1}   }
\newcommand{\impbel}[1] {\Box_{#1}  }
\newcommand{\bnf}{::=}
\newcommand{\fraglang}{ \mathcal{L}_{0} }
\newcommand{\stateval}{\mathit{V}}
\newcommand{\belbaseset}{\mathit{B}}
\newcommand{\relstate}[1]{\mathcal{E}_{#1}}
\newcommand{\relattr}[1]{\mathcal{A}_{#1}}
\newcommand{\relrepu}[1]{\mathcal{R}_{#1}}
\newcommand{\reldyn}[1]{\mathcal{P}_{#1}}
\newcommand{\iconstraint}{\mathit{U}}
\newcommand{\setbelbase}{\mathbf{S}}
\newcommand{\defin}{~\stackrel{\mbox{\scriptsize def}}{=}~} 
\newcommand{\sat}{\models }
\newcommand{\supervarphi}{\varphi_0}
\newcommand{\relepist}[1]{\mathcal{R}_{#1}}
\newcommand{\attract}[2]{\textnormal{\smiley}_{#1}^{#2}  }
\newcommand{\repuls}[2]{\textnormal{\frownie}_{#1}^{#2}  }
\newcommand{\attractreal}[2]{[\textnormal{\smiley}]_{#1}^{#2}  }
\newcommand{\repulsreal}[2]{[\textnormal{\frownie}]_{#1}^{#2}  }
\newcommand{\motiv}[1] {\mathsf{M}^{\uparrow}_{#1}  }
\newcommand{\demotiv}[1] {\mathsf{M}^{\downarrow}_{#1}  }
\newcommand{\realmotiv}[1] {\mathsf{RM}^{\uparrow}_{#1}  }
\newcommand{\realdemotiv}[1] {\mathsf{RM}^{\downarrow}_{#1}  }
\newcommand{\indiff}[1] {\mathsf{I}_{#1}  }
\newcommand{\rindiff}[1] {\mathsf{RI}_{#1}  }
\newcommand{\ambi}[1] {\mathsf{A}_{#1}  }
\newcommand{\rambi}[1] {\mathsf{RA}_{#1}  }
\newcommand{\jokertwo }  { \blacktriangleleft  }
\newcommand{\setbelbasecustom}{\setbelbase_{\setrelevantformulas}}
\newcommand{\setrelevantformulas}{\Gamma}
\newtheorem{theorem}{Theorem}
\newtheorem{proposition}{Proposition}
\newtheorem{definition}{Definition}
\newtheorem{lemma}{Lemma}
\newtheorem{example}{Example}
\title{A Computationally Grounded Framework for Cognitive Attitudes%
    \OnlyArxiv{\\(extended version)}}
\author {
    Tiago de Lima\textsuperscript{\rm 1},
    Emiliano Lorini\textsuperscript{\rm 2},
    Elise Perrotin\textsuperscript{\rm 3},
    François Schwarzentruber\textsuperscript{\rm 4}
}
\newcommand{\textif}{\text{ if }}
\begin{document}

\maketitle

\begin{abstract}
We introduce  a novel language for reasoning about agents' cognitive attitudes
 of both epistemic and motivational type. We interpret it by means of a  computationally grounded semantics using belief bases.
 Our language includes five types of modal operators for implicit belief,
 complete attraction, complete repulsion, realistic attraction
 and realistic repulsion.
 We give an axiomatization 
 and show that our operators are not mutually expressible and that they can be combined to represent  a large variety of psychological concepts including ambivalence, indifference, being motivated, being demotivated and preference. 
 We present a dynamic extension of the language that supports
 reasoning about the effects of belief change operations.   Finally,
we provide a succinct formulation of model checking for our languages 
and a PSPACE model checking algorithm 
relying  on a reduction into TQBF.
We present    some experimental
results for the implemented algorithm on computation time in a concrete example.
\end{abstract}

\begin{links}
    \link{Code}{https://gitlab.in2p3.fr/tiago.delima/cognitive-attitudes-source-code/}%
    \label{code-url}
    \OnlyAaai{\link{Extended version}{https://???}%
        \label{extended-url}}
\end{links}

\section{Introduction}

An agent's cognitive
state
encompasses her
epistemic attitudes
(e.g., beliefs)
and 
motivational (or conative)  attitudes
(e.g., desires and preferences).
Their relationships 
as well as their influence on 
the agent's 
behavior are objects of study in cognitive psychology 
\cite{AlbarracinHandbook2018} and philosophy of mind \cite{Sea01,HumberstoneFit}.
They play a prominent role
in the explanation of
others' behaviours and of our own behaviours
through the so-called
intentional stance \cite{DennettIntStance}. 
Cognitive (or mental) attitudes have also been studied by logicians, both in philosophy and in AI. Several logics dealing with epistemic and practical reasoning of rational agents have been proposed. 
These include epistemic logics \cite{Hintikka,Fagin1995},
logics of preferences \cite{WrightPreference,WrightPreference2,LiuBook,BenthemGirardRoy},
logics of beliefs and preferences \cite{Boutilier94,DBLP:journals/tplp/Lorini21,DBLP:journals/corr/abs-2101-00485}, 
logics of desires
and pro-attitudes \cite{DBLP:journals/mima/DuboisLP17,DBLP:conf/aaai/SuSLR07},
logics of intention \cite{Shoham2009,IcardPacuit,LoriniHerzigSynthese}, 
BDI (belief, desire, intention) logics \cite{Coh90,DBLP:conf/kr/HerzigL04,Mey99,Woo00}. 
The idea of describing rational agents in terms of their epistemic and motivational
attitudes is 
shared with classical decision theory and game
theory,
according to which
rational agents are assumed to make decisions on the basis
of their  beliefs and preferences. 

Most logics of cognitive attitudes rely on extensional semantics based on so-called Kripke models: sets of possible worlds (or states) supplemented with one or more binary
accessibility 
relations for each agent representing, e.g.,  the agent's epistemic uncertainty or preference ordering over the states.  Multi-relational
Kripke models are general and mathematically
elegant. Nonetheless, they are limited from
a modeling point of view as they are not succinct. Even a simple situation
like a   card game (e.g.,  Hanabi) with
4 players having 5 cards each among a set of 50 cards
require a Kripke model
with $5.5 \times 10^{23}$ states
to represent all alternatives  on which 
an agent's uncertainty and  preferences bear.
For this reason, 
it is hard, if not unfeasible,  to implement model checking
for a logic of cognitive attitudes using a Kripke semantics
since the model cannot be explicitly constructed. 
Some alternatives
to solve this succinctness problem have been proposed 
in the area of epistemic logic 
including 
semantics based on 
BDDs \cite{DBLP:journals/logcom/BenthemEGS18}, Boolean formulas and programs \cite{DBLP:conf/atal/CharrierS15,DBLP:conf/atal/CharrierS17,DBLP:conf/aiml/CharrierS18,DBLP:journals/logcom/CharrierPS19},
and the notion of visibility \cite{DBLP:journals/ai/CooperHMMPR21,DBLP:conf/atal/HoekTW11}.
However, to our knowledge no general
succinct semantics
for a  logic
of cognitive attitudes, combining epistemic
ones with motivational ones, has been proposed up to now. 

In this paper, 
we tackle this problem by relying on 
a semantics for cognitive attitudes based on the notion
of belief base that
has been shown to provide an interesting
computationally grounded alternative to Kripke models
in representing epistemic concepts
in the single-agent case 
\cite{KONOLIGE}
as well as in the multi-agent case 
\cite{LoriniAAAI2018,DBLP:journals/corr/abs-1907-09114,LoriniAI2020,LoriniRapionAAMAS2022}.
Unlike standard Kripke semantics
for  logics of cognitive attitudes
in which possible states and 
agents' 
accessibility relations
are given as primitive, 
in our  semantics
they 
are  defined    from and grounded on  the primitive
concept
of belief base.
We will use belief bases
to define 
three types of accessibility relation 
capturing, respectively,
the states that an agent
considers possible,
those she finds attractive and
those she finds repulsive. The first belongs to the epistemic
sphere, while the second and the third
belong to the motivational sphere. 
By means of this semantics, we will  interpret
a modal language
including five 
modal operators for implicit belief,
 complete attraction, complete repulsion, realistic attraction
 and realistic repulsion
 as well as a dynamic
 extension of it that 
supports
 reasoning about the effects
 of belief change operations. 
 We will show that 
 in our semantics
model checking can be formulated
in a more succinct way
than in the standard Kripke semantics,
which opens up the possibility
of using it in practice. 
We will  provide a PSPACE model checking procedure
relying  on a reduction into TQBF
and   some experimental
results for the implemented procedure on computation time.

\paragraph{Related work} Our belief base approach  has some   aspects
in common with the body of literature
on preference representation  based on priority graphs
\cite{JonghLiu2009,LiuBook,SouzaMoreira2021,SouzaThesis},
that was recently   extended to deontic logic 
in \cite{BenthemGrossiLiu2014}.
We share with these works
 the idea  i) that there are two alternative approaches to the representation of 
 preferences and more generally of mental attitudes, namely the extensional approach based, e.g., on multi-relational Kripke models and the syntactic approach based, e.g., on belief bases or priority graphs, and   ii) that  the two approaches must coexist and be integrated into the same logical framework.
 But there are also some important differences. 
 These works are  mostly focused on the single-agent case while our approach is multi-agent. 
 Secondly,  they mainly focus on preference representation, while our emphasis is on the bipolar aspect of motivational  attitudes
(i.e., attraction vs repulsion, motivation vs demotivation). 
Thirdly,
our modal language of cognitive attitudes is profoundly different from existing ones,  including from the one presented 
in \cite[Chapters 4 and 5]{SouzaThesis}
which, like ours, covers both epistemic and motivational attitudes. Souza's language 
 includes ``betterness'' (S4) 
and ``strict betterness'' modalities,
while our language includes primitive modalities for
(complete and realistic) attraction and repulsion.
Our language 
also includes explicit belief modalities
and the axiomatization we will provide has specific axioms relating explicit belief to implicit mental attitudes. Souza's language has no explicit belief involved.  

All proofs and additional content can be found in 
\OnlyAaai{the extended version of the paper (see URL on page~\pageref{extended-url}).}
\OnlyArxiv{the appendix.}

\section{Belief base semantics}%
\label{sec:semantics}

Following  \cite{LoriniAAAI2018,LoriniAI2020},
in this section
we present a formal semantics
for cognitive attitudes 
exploiting  belief bases,
where elements
of an agent's belief base
are explicit beliefs of the agent. 
Basic notions of appetitive
and aversive desire
are directly  defined from
a belief base.
We will use belief bases to define 
three types of accessibility
relations 
for belief, attraction and repulsion.

\subsection{States}

Assume
a countably  infinite set  of atomic propositions $\PROP$
and
 a finite set of agents $\AGT = \{ 1, \ldots, n \}$.
 We suppose the set $\PROP$
 includes special atomic formulas
 of type $\rewa{i}$
 and $\punis{i}$
 for every $i \in \AGT$
 meaning respectively that 
 ``agent $i$ gets a reward''
 (or
 ``agent $i$ feels pleasure'')
 and
``agent $i$ gets a punishment''
 (or ``agent $i$ feels pain'').
We define the language $\fraglang$ for
explicit belief 
by the following grammar in Backus-Naur Form (BNF):
\[ \fraglang\defin \alpha \bnf \prop
                          \mid \lnot\alpha
                          \mid \alpha \land \alpha
                          \mid \expbel{i}\alpha   , \]
where $p$ ranges over $\atmset$
and $i$ ranges over $\agtset$.
$\fraglang$ is the language used to represent explicit beliefs.
The formula $\expbel{i} \alpha$ reads ``agent $i$ has the explicit belief that $\alpha$''. 

In our semantics, a state is not a primitive notion but is decomposed into different elements:
one belief base per agent
and a valuation of propositional atoms. 
An agent's belief base is  analogous to the agent's local state
and
the propositional valuation is analogous to the local
state of the environment
in interpreted systems \cite{Fagin1995}. 
\begin{definition}[State]%
\label{state}
A state is a
tuple $\st = \big( (\belbaseset_i)_{i \in \agtset},\stateval \big)$
where $\belbaseset_i \subseteq \fraglang$
is agent $i$'s belief base, 
and 
$\stateval \subseteq \atmset$ is the actual environment.
The set of all states is noted $\setbelbase$.
\end{definition}

The following definition specifies truth conditions for formulas in $\fraglang$ relative to states. 
\begin{definition}[Satisfaction relation]%
\label{satrel}
Let 
$\st = \big( (\belbaseset_i)_{i \in \AGT}, \stateval \big) \in \setbelbase$.
We define $\st \models \alpha$ by:
\begin{eqnarray*}
    \st \models \prop & \textif & \prop \in \stateval,\\
    \st \models \lnot\alpha & \textif & \st \not\models \alpha,\\
    \st \models \alpha_1 \land \alpha_2 & \textif & \st \models \alpha_1 \text{ and } \st \models \alpha_2,\\
    \st \models \expbel{i} \alpha & \textif & \alpha \in \belbaseset_i.
\end{eqnarray*}
\end{definition}
The explicit belief operator $ \expbel{i}$ has a
 set-theoretic interpretation:
agent $i$ has the explicit belief that $\alpha$
at  state $\st$
if and only if $\alpha$ is included in its  belief base
at $\st$. 

We define the notion 
of appetitive desire
(resp. aversive desire) from
explicit belief
combined with reward (resp. punishment).
Specifically,
we call agent $i$'s  appetitive desire base (resp. agent $i$'s aversive desire base)
at a  state $\st $
the set of all  facts
that, according to agent $i $'s explicit beliefs at $\st$,
entail a reward (resp. a punishment). 
The special atoms $\rewa{i} $
and $\punis{i} $ are key elements of the definition. 
\begin{definition}[Appetitive and aversive desire base]
\label{appaverdb}
Let $i \in \AGT$ and $\st \in \setbelbase$.
We define: 
    \begin{align*}
&\desbaseset{i}(\st  )=\{\alpha \in \fraglang \suchthat 
\alpha \rightarrow \rewa{i} \in \belbaseset_i
\},\\
&\undesbaseset{i}(\st  )=\{\alpha \in \fraglang \suchthat
\alpha \rightarrow \punis{i} \in \belbaseset_i
\}.
\end{align*}
$\desbaseset{i}(\st  )$
and $\undesbaseset{i}(\st  )$
are, respectively,
agent $i$'s appetitive and aversive
desire bases at state $\st$. 
\end{definition}
Note that we could have taken 
appetitive and aversive
desire bases
 as primitive concepts. We prefer to define them  from the 
 primitive concept of belief base
for the sake of minimality. 
The solution
we adopt uses only one primitive concept (belief base)
instead of three. This point  is reminiscent  
 of the debate  in philosophy
 opposing   the Humean to the non-Humean 
view of desire 
\cite{Karlsson2000}. 
 While according to the Humean view 
 desires and aversions
 are distinguished from beliefs, according
 to the non-Humean view,
 also called the
\emph{desire-as-belief (DAB) thesis},  
 they are reducible
 to them.\footnote{
 An  issue of debate 
 is the 
 violation  by the DAB thesis of  the 
  independence requirement
  between desire and belief change
  (see  \cite{DesireLewis1,DesireLewis2}). 
  In 
  \cite{BradleyList2008}
  it is shown  that under some  conditions
  of separation between evaluative and non-evaluative
  formulas, the requirement is not violated.
  We leave the  analysis of    analogous  conditions
 in our framework to future work. 
  } In this respect,
 our semantics is in line
 with the non-Humean view.

\subsection{Accessibility relations}

Three types
of accessibility relation, for epistemic possibility,
attraction and repulsion,
can be directly computed from
the agents' belief bases that, following Definition \ref{state}, are included 
in the state description.  
The following definition introduces  the 
epistemic accessibility relation. 
\begin{definition}[Epistemic alternatives]\label{DefAlternative}
Let $i \in \AGT$.
$\relstate{i}$
is the binary relation on  $\setbelbase$
such that,
for all
$ \st = \big( (\belbaseset_i)_{i \in \AGT}, \stateval \big),
\st '= \big( (\belbaseset_i')_{i \in \AGT}, \stateval' \big) \in \setbelbase$:
\begin{align*}
&\st \relstate{i} \st' \text{ if and only if } 
\forall \alpha \in \belbaseset_i:
\st' \models \alpha. 
\end{align*}
\end{definition}
$\st \relstate{i} \st '$ means
that
at state 
$\st$
agent $i$
considers state $\st'$
(epistemically) possible.
According to the definition,
the latter is the case 
if and only if
$\st'$ satisfies all
facts that agent $i$ explicitly believes
at $\st$. The set 
$\relstate{i}(\st)= \{ \st' \in \setbelbase : \st\relstate{i} \st ' \}$
is agent $i$'s epistemic state at $\st$.

The relation $\relstate{i}$ as defined above is not necessarily reflexive.
As explained in \cite{LoriniAI2020}, reflexivity can be obtained by restricting it to the subclass of ``belief correct'' states in which the information in an agent's belief base is actually true.
We do not make this restriction since we model belief instead of knowledge.
Unlike knowledge, an agent's belief can be incorrect (i.e., it might be the case that an agent believes that $\varphi$ is true whereas $\varphi$ is actually false).
We also do not consider introspection for belief.
However, there is a natural way to modify the accessibility relation $\relstate{i}$ to make implicit belief introspective.
It would be sufficient to add the condition that, for a state $\st'$ to be accessible from state $\st$, agent $i$'s belief base at state $\st'$ is identical to agent $i$'s belief base at state $\st$
(i.e., $\belbaseset_i=\belbaseset_i'$).
By adding this extra condition, the relation $\relstate{i}$ would become transitive and Euclidean.
Transitivity corresponds to positive introspection, while Euclidianity corresponds to negative introspection.

Analogously, we compute an agent's 
accessibility
relations for
attraction and repulsion
from its  appetitive  and aversive desire bases,
that as shown in Definition 
\ref{appaverdb} are defined from its  belief base. 
Specifically, 
we suppose that an agent is attracted to a state 
(or finds it attractive) if and only if at least one of its  appetitive desires is satisfied at  the state, i.e., the state makes something the agent wishes to achieve true. Conversely, we suppose that an agent is repelled  by a state (or finds it repulsive) if and only if at least one of its   aversive desires  is realized   at  the state, i.e., the state makes something the agent wishes to avoid true.
\begin{definition}[Attractive and repulsive  alternatives]\label{DefAttractiveRepulsiveAlternative}
Let $i \in \AGT$. 
$\relattr{i}$ and
 $\relrepu{i}$
are the binary relations on  $\setbelbase$
such that,
for all
$ \st = \big( (\belbaseset_i)_{i \in \AGT}, \stateval \big),
\st '= \big( (\belbaseset_i')_{i \in \AGT}, \stateval ' \big) \in \setbelbase$:
\begin{align*}
&\st \relattr{i} \st ' \text{ if and only if } 
\exists  \alpha \in \desbaseset{i}(\st ) \text{ s.t. }
\st' \models \alpha, \\
&\st \relrepu{i} \st' \text{ if and only if } 
\exists  \alpha \in \undesbaseset{i}(\st ) \text{ s.t. }
\st' \models \alpha.
\end{align*}
\end{definition}
$\st \relattr{i} \st'$ means
that
at state $\st$
agent $i $ is attracted to state  $\st'$,
whereas 
$\st \relrepu {i} \st'$ means
that
at $\st$ it  is repelled  by $\st'$.

\subsection{Models}
 
The following definition introduces the concept of model, namely a state supplemented with a set of states, called \emph{context}. 
The latter includes all states that are compatible with the agents' common ground,
i.e., the body of information that they  commonly believe to be the case \cite{StalnakerCommonGround}. 

\begin{definition}[Model]%
\label{univM}
A model is a pair $(\st,\iconstraint)$
such that  $\st\in\iconstraint\subseteq \setbelbase$. 
The class of models is noted $\classbelbase$.
\end{definition}

Note that we suppose  the actual state
to be included in the agents' common ground. 
However, validities would not change even if we did not suppose  this, 
due to the fact that we model belief instead of knowledge.
If we modeled knowledge instead of belief,  supposing  that $\st \in \iconstraint$
would  become a necessary requirement.  Note also that 
$\iconstraint$ 
can be any  (possibly infinite) set
of states with no restriction on the information
that  is included in belief bases. 
Nonetheless, in some cases it is   useful  to
restrict  to models in which an agent's belief base
is constructed
from a finite  repository of information,
or vocabulary. 
The  agent's vocabulary plays a role analogous to that of awareness 
in  \cite{Fag87}. 
This observation leads to the following definition. 
 
\begin{definition}[$\setrelevantformulas$-model]%
\label{univM2}
Let $\setrelevantformulas = (\setrelevantformulas_i)_{i \in \AGT}$
where, for every $i\in \AGT$, $\setrelevantformulas_i\subseteq \fraglang$
represents agent $i$'s vocabulary.
The model $(\st,\iconstraint)$
in $\classbelbase$
is said to be a $\setrelevantformulas$-model 
if $\st \in \iconstraint=\setbelbasecustom$,
with
$
\setbelbasecustom=\Big\{ 
  \big( (\belbaseset_i')_{i \in \agtset},\stateval'\big)
\in \setbelbase\suchthat \forall i \in \agtset, 
\belbaseset_i' \subseteq \setrelevantformulas_i
\Big\}.
$
The class of $\setrelevantformulas$-models
is noted $\classbelbaseuniv(\setrelevantformulas)$.
\end{definition}

$\setrelevantformulas = (\setrelevantformulas_i)_{i \in \AGT}$ 
in Definition \ref{univM2} is also  called agent vocabulary profile.
Clearly, when $\setrelevantformulas_i= \fraglang $
for every $i\in \agtset $,
we have  $\setbelbasecustom=\setbelbase$. 
A model  $(\st,\setbelbase)$  is a model with maximal ignorance:
it only  contains the information provided by the actual  state $\st$.

\section{Language}%
\label{sec:languaeg}

 We define a general   modal  language 
 $\lang$
 extending the  
 language $\fraglang$
by five types of modal operators for 
cognitive attitudes:
 implicit belief ($\impbel{i}$),
 complete attraction ($\attract{i}{ }$), complete repulsion ($\repuls{i}{ }$), realistic attraction
 ($  \attractreal{i}{}$)
 and realistic repulsion ($\repulsreal{i}{}$). 
 It 
is defined by the following grammar:
 \begin{center}\begin{tabular}{llcl}
 $\lang $ & $ \defin $  $\varphi$  & $\bnf$ & $\alpha  \mid \neg\varphi \mid \varphi  \wedge \varphi   \mid  \impbel{i}\varphi 
 \mid 
  \attract{i}{ } \varphi
  \mid \repuls{i}{ } \varphi \mid$ \\
  && & $
  \attractreal{i}{}\varphi  \mid \repulsreal{i}{}\varphi , 
                        $\
\end{tabular}\end{center} 
where $\alpha$ ranges over $\fraglang$
and $i $ ranges over $\AGT $.
The other Boolean constructions $\top$, $\bot$, $\lor$, $\limply$
and $\lequiv$ are defined from $\alpha$, $\neg$ and $\land$ as usual. 
The formula $ \impbel{i}\varphi$ is read
``agent $i$ implicitly believes that $\varphi$''.
The formula $ \attract{i}{}   \varphi$ is read
``agent $i$ is completely 
attracted to  the fact
that $\varphi$'',
whereas $ \repuls{i}{}   \varphi$ is read
``agent $i$ is 
completely 
repelled by the fact
that $\varphi$''. Furthermore, 
the formula $ \attractreal{i}{}   \varphi$ is read
``agent $i$ is realistically  attracted to the fact
that $\varphi$'',
whereas $ \repulsreal {i}{}   \varphi$ is read
``agent $i$ is realistically repelled by the fact
that $\varphi$''.
The satisfaction relation
between models 
and $\lang$-formulas
is defined  as follows. 
(Boolean cases are omitted since they are defined as usual.) 
\begin{definition}[Satisfaction relation, cont.]\label{truthcond4}
Let  $ (\st ,\iconstraint)   \in\classbelbase$. We define $(\st, \iconstraint) \models \varphi$ by:
\begin{eqnarray*}
    (\st,\iconstraint) \models \alpha & \textif & \st \models \alpha,\\
    (\st,\iconstraint)\models \impbel{i}\varphi & \textif
        & \forall \st' \in  \iconstraint , \text{ if } \st \relstate{i} \st'
            \text{ then }\\
&&             (\st',\iconstraint) \models \varphi,\\
             (\st,\iconstraint) \models \attract{i}{} \varphi & \textif & \forall \st' \in  \iconstraint  , \text{if }
  ( \st' , \iconstraint) \models \varphi \text{ then } \\
  && \st \relattr{i} \st',\\
 (\st ,\iconstraint) \models \repuls{i}{} \varphi & \textif & \forall \st' \in  \iconstraint  , \text{ if }
  ( \st' , \iconstraint) \models \varphi \text{ then }  \\
  && \st \relrepu{i} \st',\\
   (\st,\iconstraint) \models \attractreal{i}{}\varphi 
 & \textif & 
 \forall \st' \in  \iconstraint  , \text{ if }
  ( \st' , \iconstraint) \models \varphi
  \text{ and }\\
  &&   \st\relstate{i} \st'\text{ then }   \st\relattr{i} \st',\\
    (\st,\iconstraint) \models \repulsreal{i}{}\varphi
 & \textif &
 \forall \st' \in  \iconstraint , \text{ if }
  ( \st' , \iconstraint) \models \varphi
  \text{ and }
 \\
&&   \st\relstate{i}\st' \text{ then }   \st\relrepu{i} \st'.
\end{eqnarray*}
\end{definition}
Interpretations of the
different modalities 
are restricted to the actual context $\iconstraint$.  
The semantic interpretations
of the three modalities 
$\impbel{i}$, $\attract{i}{}$
and 
$\repuls{i}{}$
highlight their conceptual symmetry. 
The implicit belief
operator  is interpreted in the usual way:
agent $i$
implicitly believes
that $\varphi$, i.e. $\impbel{i} \varphi $, 
if and only if $\varphi
$
is true at all states that she considers epistemically  possible. 
 Conversely, agent $i$ is completely
 attracted to (resp. repelled by)  $\varphi$, i.e. 
 $\attract{i}{} \varphi $ (resp. $\repuls{i}{} \varphi$), 
 if and only if every state satisfying $\varphi$
 is  attractive (resp. repulsive)  to her. 
 Modalities $\attract{i}{}$
 and $\repuls{i}{}$
 are instances of the so-called
 ``window'' modality  \cite{Humberstone1983,GargovEtAl1987,Benthem79,Goranko1990}. The idea of using  a
 ``window'' modality for representing desire
 was  defended in \cite{DBLP:journals/mima/DuboisLP17}.
 Realistic attraction (resp. repulsion)  is an agent's attraction
 (resp. repulsion) relative
 to its  epistemic state:
  agent $i$ is realistically attracted 
 to 
 (resp. repelled by) 
 $\varphi$, i.e.
 $\attractreal {i}{}\varphi$ (resp. $\repulsreal{i}{}\varphi$),
 if every $\varphi$-state
in its  epistemic state is attractive (resp. repulsive).  
 
Notions of satisfiability and validity of $\lang$-formulas for the class of models $\classbelbase$ are defined in the usual way:
$\varphi$ is satisfiable if there exists $(\st,\iconstraint) \in \classbelbase$ such that $(\st,\iconstraint) \models \varphi$,
and
$\varphi$ is valid (noted $\models \varphi $) if $\lnot\varphi$ is not satisfiable. 
 We now establish that the operators $\attract{i}{} $, $\repuls{i}{} $, $\attractreal{i}{}$ and $ \repulsreal{i}{}$ are not definable using the other operators, including each other.

\begin{theorem}\label{theo:unexpr}
     The operators
     $\attract{i}{} $,
     $\repuls{i}{} $,
      $\attractreal{i}{}  $
     and 
     $ \repulsreal{i}{} $
     are not expressible
     with the other modalities or each other.
\end{theorem}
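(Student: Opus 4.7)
The plan is to prove the four non-expressibility claims by model-theoretic separation. For each target operator $\mathcal{O}$ among $\attract{i}{}, \repuls{i}{}, \attractreal{i}{}, \repulsreal{i}{}$, I would exhibit two pointed models $(\st_1,\iconstraint_1)$ and $(\st_2,\iconstraint_2)$ together with a witness formula $\varphi$ such that the two models disagree on $\mathcal{O}\varphi$ but agree on every formula of the sublanguage $\lang\setminus\{\mathcal{O}\}$. A uniform way of discharging the latter requirement is to introduce, for every subset $X$ of modalities, a notion of ``$X$-bisimulation'' on pointed models: a relation $Z$ between states requiring agreement on all $\fraglang$-formulas (equivalently, matching valuations and matching belief bases) together with the usual zigzag conditions on $\relstate{i}$ when $\impbel{i}\in X$, on $\relattr{i}$ when $\{\attract{i}{},\attractreal{i}{}\}\cap X\ne\emptyset$, and on $\relrepu{i}$ when $\{\repuls{i}{},\repulsreal{i}{}\}\cap X\ne\emptyset$. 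A standard induction on formula complexity then guarantees that $X$-bisimilar pointed models satisfy the same formulas of the fragment generated by $X$.

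With this machinery in hand, the heart of the argument becomes a combinatorial construction in each of the four cases. For the complete operators $\attract{i}{}$ and $\repuls{i}{}$, the natural idea is to exploit the difference in scope between them and their realistic counterparts: the complete operators scan the entire context $\iconstraint$, whereas the realistic ones only see the epistemic sphere $\relstate{i}(\st)$. I would therefore build two models that coincide inside $\relstate{i}(\st)$ but differ on the presence of a single extra state $\bar\st$ lying outside $\relstate{i}(\st)$, chosen so that $\bar\st$ is not attractive but \emph{is} repulsive from $\st$; by the bisimulation above, the operators $\impbel{i}, \attractreal{i}{}, \repulsreal{i}{}$ and $\repuls{i}{}$ cannot see the change, whereas $\attract{i}{}\varphi$ with $\varphi$ pinpointing $\bar\st$ does. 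The case of $\repuls{i}{}$ is the mirror image, obtained by swapping the roles of $\rewa{i}$ and $\punis{i}$. For the realistic operators $\attractreal{i}{}$ and $\repulsreal{i}{}$, the modification has to occur \emph{inside} the epistemic sphere, because preserving the complete operators forces agreement on the global attraction/repulsion pattern; the distinguishing pair is obtained by tweaking the belief bases of certain epistemically accessible states in a controlled manner.

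The main obstacle is the tight coupling between each agent's belief base and her three derived accessibility relations $\relstate{i}, \relattr{i}, \relrepu{i}$: since all three are read off from a single primitive, they cannot be adjusted independently, yet the $\fraglang$-clause of the bisimulation forces related states to share exactly the same belief bases. Each construction therefore rests on a careful choice of belief-base contents whose syntactic effect collapses, once evaluated against the particular context $\iconstraint$ in use, into a change of exactly one of the three relations. I expect most of the technical effort to go into finding such almost-surgical edits---typically by introducing auxiliary atoms and tuning occurrences of $\rewa{i},\punis{i}$ so that the side-effects of an added clause of the form $\alpha\rightarrow\rewa{i}$ are invisible on $\relstate{i}$ and $\relrepu{i}$ for every state actually appearing in $\iconstraint$---and in verifying that the resulting pair is genuinely $X$-bisimilar for the intended $X=\lang\setminus\{\mathcal{O}\}$.
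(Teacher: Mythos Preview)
Your overall strategy---distinguish two pointed models on the target modality while making them agree on the rest of the language---is the right one and is what the paper does. But the bisimulation machinery you sketch has two genuine gaps. First, the zigzag clauses are stated on the wrong relations: $\attract{i}{}\varphi$ is equivalent to a box over the \emph{complement} of $\relattr{i}$ (within the context), $\repuls{i}{}\varphi$ to a box over the complement of $\relrepu{i}$, and $\attractreal{i}{}\varphi$, $\repulsreal{i}{}\varphi$ to boxes over $\relstate{i}\setminus\relattr{i}$ and $\relstate{i}\setminus\relrepu{i}$ respectively. Zigzag on $\relattr{i}$ and $\relrepu{i}$ themselves does not yield the preservation lemma you announce. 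Second, and more seriously, demanding agreement on all of $\fraglang$ forces $Z$-related states to be literally identical (a state \emph{is} a valuation plus a profile of belief bases), so your $Z$ is a partial identity between $\iconstraint_1$ and $\iconstraint_2$. The extra state $\bar\st$ then has no $Z$-partner, and since the window modalities you must preserve (e.g.\ $\repuls{i}{}$) quantify over the whole context, the invariance induction breaks down at $\bar\st$ unless you separately control the truth of every subformula there---which your bisimulation does not do. To make $\bar\st$ invisible to $\repuls{i}{}$ you would need $\bar\st$ repulsive from \emph{every} state of $\iconstraint_2$, which in turn forces every belief base in $\iconstraint_1$ to already contain a suitable $\alpha\rightarrow\punis{i}$; this is a global constraint on $\iconstraint_1$, not something the bisimulation gives you.

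The paper sidesteps both problems with a different device: instead of a language-independent bisimulation, it fixes the hypothetical defining formula $\phi$, observes that $\phi$ mentions only finitely many atoms, and then uses \emph{fresh} atoms outside $\atmset_\phi$ to build two small contexts whose states correspond relative to $\phi$'s vocabulary but are not in full $\fraglang$-agreement. For $\attract{i}{}$, for instance, the two contexts are $\{S,S_1\}$ and $\{S,S_2\}$ where $S_1$ and $S_2$ differ only in a fresh atom $q$; one then shows directly by induction on the subformulas of $\phi$ that $(S,\iconstraint_1)$ and $(S,\iconstraint_2)$, and likewise $(S_1,\iconstraint_1)$ and $(S_2,\iconstraint_2)$, agree on every such subformula. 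The fresh-atom trick is exactly what dissolves the coupling you correctly identify as the main obstacle: it lets two ``corresponding'' states have different belief bases and valuations (hence different attraction relations) while remaining indistinguishable by $\phi$.
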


This result still holds
in the context of the dynamic extension
we will present later in the paper.

\section{Axiomatics}\label{sec:axioms}

 In this section we provide  a sound and complete axiomatization for the class of models $\classbelbase$.

\begin{definition}
 We define the logic $\mathsf{LCA}$
 (Logic of Cognitive Attitudes)  to be the
extension of classical propositional logic given by the following axioms and rules
of inference:
\begin{align}
& \big( \impbel{i}\varphi \wedge  \impbel{i}(\varphi \rightarrow \psi)  \big) 
\rightarrow \impbel{i}\psi  \tagLabel{A1}{ax:A1}\\
& \big( \circledcirc    \varphi \wedge \circledcirc (\neg \varphi \wedge \psi)  \big) 
\rightarrow \circledcirc  \psi \tagLabel{A2}{ax:A2}\\
& \expbel{i }\alpha \rightarrow \impbel{i} \alpha
\tagLabel{A3}{ax:A3}\\
& \expbel{i }(\alpha\rightarrow \rewa{i} )\rightarrow \attract{i}{} \alpha \tagLabel{A4}{ax:A4} \\
& \expbel{i }(\alpha\rightarrow \punis{i} )\rightarrow \repuls{i}{} \alpha \tagLabel{A5}{ax:A5} \\
    &\attract{i}{}\varphi \rightarrow   \attractreal {i}{}\varphi \tagLabel{A6}{ax:A6} \\
        &\repuls{i}{}\varphi \rightarrow   \repulsreal  {i}{}\varphi \tagLabel{A7}{ax:A7}\\
   & \impbel{i}{}\varphi \rightarrow   \attractreal{i}{} \neg \varphi \tagLabel{A8}{ax:A8}\\
      & \impbel{i}{}\varphi \rightarrow   \repulsreal {i}{} \neg \varphi \tagLabel{A9}{ax:A9}\\
    &\attractreal{i}{}\varphi \rightarrow  \impbel{i} (\varphi \rightarrow \rewa{i}) \tagLabel{A10}{ax:A10}\\
    & \repulsreal{i}{}\varphi \rightarrow  \impbel{i} (\varphi \rightarrow \punis{i}) \tagLabel{A11}{ax:A11}\\
    & \frac{ \varphi }{ \impbel{i}\varphi } \tagLabel{R1}{ax:R1}\\
     & \frac{ \varphi }{  \circledcirc  \neg \varphi }
     \tagLabel{R2}{ax:R2} \\
     & \frac{ \varphi\lequiv\psi }{ \circledcirc\varphi\lequiv\circledcirc\psi }
     \tagLabel{R3}{ax:R3}
\end{align}
for $\circledcirc \in \{    \attract{i}{},\repuls {i}{} , 
\attractreal {i}{},\repulsreal  {i}{} \}$.%
\end{definition}

Most of these axioms and rules are natural to consider and directly follow from the semantics of the operators.
Let us highlight the four  that may not be immediately intuitive: 
\ref{ax:A2},
\ref{ax:A8},
\ref{ax:A9}, and \ref{ax:R2}. 
\ref{ax:A2} can be interpreted as follows.
Suppose an agent is completely attracted to the fact that $\varphi$
(i.e., $\attract{i}{}\varphi$). 
This means it  finds  attractive all situations at which 
$\varphi$ holds, 
which implies that it finds attractive all situations at which
$\varphi \wedge \psi $ holds.
 Moreover, suppose
the agent is completely attracted to the fact that  $\neg \varphi \land  \psi$
(i.e., $\attract{i}{}(\neg \varphi \land  \psi) $).
This means that it finds attractive all situations at which 
$\neg \varphi \land  \psi$ holds.
But, finding attractive all situations at which 
$\varphi \wedge \psi $ holds
and 
finding attractive all situations at which 
$\neg \varphi \wedge \psi $ holds clearly
implies 
finding attractive all situations at which 
$\psi $ holds
(since the union of the set of 
$\varphi \wedge \psi $-situations
and    the set of 
$\neg \varphi \wedge \psi $-situations
equals
the set of $\psi $-situations). 
The latter means being completely attracted
to the fact that $\psi$
(i.e., $\attract{i}{}\psi$).
The corresponding versions of \ref{ax:A2} for complete repulsion, realistic attraction and realistic repulsion are justified in an analogous way. 
\ref{ax:A8},
\ref{ax:A9}, and \ref{ax:R2}
 say that if $\varphi$ is believed then its negation is both realistically attractive and realistically repulsive, and that the negation of a valid fact is both completely and realistically attractive as well as repulsive. This can be understood as follows: if $\varphi$ is true in all considered situations, then all considered situations in which it is false (of which there are none) are both attractive and repulsive.  Put differently, 
when an agent believes that $\varphi$, she  has no realistic concern about $\neg \varphi$. Thus, she is willing to realistically  appreciate and despise $\neg \varphi$
 at the same time.

Checking soundness of this logic w.r.t.\ our class of models is straightforward.
The proof of the
completeness
follows the general
lines of the proofs  of
Theorem 1, Theorem 2 and Corollary 1
in 
\cite{LoriniAI2020}.
\begin{theorem}\label{theo:2}
    The logic $\mathsf{LCA}$
    is sound and complete
    for the class $\classbelbase$. 
\end{theorem}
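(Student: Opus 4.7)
The plan is to establish soundness by an axiom-by-axiom verification against Definitions~\ref{DefAlternative}, \ref{DefAttractiveRepulsiveAlternative}, and \ref{truthcond4}, and to establish completeness by a canonical-model construction along the lines of \cite{LoriniAI2020}.

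Soundness is a routine case check. Axiom \ref{ax:A1} and rule \ref{ax:R1} are the standard K-axiom and necessitation for $\impbel{i}$ read as a universal modality over $\relstate{i}$. Axiom \ref{ax:A3} is immediate from Definition~\ref{DefAlternative}. Axioms \ref{ax:A4} and \ref{ax:A5} hold because $\alpha \rightarrow \rewa{i} \in \belbaseset_i$ (resp.\ $\alpha \rightarrow \punis{i} \in \belbaseset_i$) places $\alpha$ in $\desbaseset{i}$ (resp.\ $\undesbaseset{i}$), forcing every $\alpha$-state to be an $\relattr{i}$- (resp.\ $\relrepu{i}$-)successor. The window-modality axiom \ref{ax:A2} and rules \ref{ax:R2}, \ref{ax:R3} reflect the set-theoretic behaviour of the four motivational modalities. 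Axioms \ref{ax:A6}--\ref{ax:A9} are immediate by comparing the clauses for complete and realistic attitudes, while \ref{ax:A10} and \ref{ax:A11} follow by unfolding the realistic clauses and using the fact that every $\alpha \in \desbaseset{i}$ (resp.\ $\undesbaseset{i}$) is witnessed by an explicit belief $\alpha \rightarrow \rewa{i} \in \belbaseset_i$ (resp.\ $\alpha \rightarrow \punis{i} \in \belbaseset_i$).

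For completeness, given a consistent $\varphi_0 \in \lang$, extend it to an $\mathsf{LCA}$-maximal consistent set $\omega_0$, and let $W$ be the set of all $\mathsf{LCA}$-MCSs. For each $\omega \in W$ define
\[
s(\omega) = \bigl((\belbaseset_i(\omega))_{i\in\AGT}, \stateval(\omega)\bigr), \quad \belbaseset_i(\omega) = \{\alpha \in \fraglang : \expbel{i}\alpha \in \omega\}, \quad \stateval(\omega) = \{p \in \PROP : p \in \omega\},
\]
and put $\iconstraint = \{s(\omega) : \omega \in W\}$. The relations $\relstate{i}, \relattr{i}, \relrepu{i}$ on $\iconstraint$ are then fully determined by Definitions~\ref{DefAlternative} and \ref{DefAttractiveRepulsiveAlternative}. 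The goal is the truth lemma $(s(\omega), \iconstraint) \models \varphi$ iff $\varphi \in \omega$, proved by induction on $\varphi$. The atomic, explicit-belief, and Boolean cases are immediate from the construction. For $\impbel{i}\psi$, the direction $\impbel{i}\psi \notin \omega \Rightarrow (s(\omega), \iconstraint) \not\models \impbel{i}\psi$ uses \ref{ax:A1}, \ref{ax:A3}, and \ref{ax:R1} to show that $\{\alpha \in \fraglang : \expbel{i}\alpha \in \omega\} \cup \{\neg\psi\}$ is $\mathsf{LCA}$-consistent and hence extends to a witness MCS $\omega'$ with $s(\omega)\relstate{i} s(\omega')$ and $\neg\psi \in \omega'$. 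The cases for $\attract{i}{}\psi$ and $\repuls{i}{}\psi$ use the analogous consistency argument based on \ref{ax:A4}, \ref{ax:A5}, \ref{ax:A2}, \ref{ax:R2}, and \ref{ax:R3}; the cases for $\attractreal{i}{}\psi$ and $\repulsreal{i}{}\psi$ reduce to the preceding ones via \ref{ax:A6}--\ref{ax:A11}.

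The main obstacle is the converse direction of the truth lemma in each modal case: one must show that the belief-base-induced relations on $\iconstraint$ do not ``leak'' unintended successors that would falsify modal formulas present in the MCS. For $\impbel{i}\psi$ this means verifying that every MCS $\omega'$ extending $\{\alpha \in \fraglang : \expbel{i}\alpha \in \omega\}$ already satisfies $\psi$ whenever $\impbel{i}\psi \in \omega$; for the motivational modalities one needs analogous control, linking membership in $\desbaseset{i}, \undesbaseset{i}$ back to explicit beliefs via \ref{ax:A10} and \ref{ax:A11}. This coordination is precisely where the proof mirrors the argument structure of Theorems~1 and 2 and Corollary~1 of \cite{LoriniAI2020}. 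Once it is established modality by modality, applying the truth lemma to $\omega_0$ yields a model $(s(\omega_0), \iconstraint) \in \classbelbase$ that satisfies $\varphi_0$, completing the proof.
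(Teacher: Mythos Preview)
Your soundness sketch is fine, but the completeness argument has a genuine gap in exactly the place you flag as ``the main obstacle'' and then defer to \cite{LoriniAI2020}. With $\belbaseset_i(\omega)=\{\alpha:\expbel{i}\alpha\in\omega\}$ and $\iconstraint=\{s(\omega):\omega\in W\}$, the relation $\relstate{i}$ is computed from the explicit beliefs only, and nothing in $\mathsf{LCA}$ forces an MCS $\omega'$ that satisfies all of $\omega$'s explicit beliefs to also satisfy $\psi$ when $\impbel{i}\psi\in\omega$. Concretely, take $\omega$ containing $\impbel{i}\psi$ but no $\expbel{i}\alpha$ that syntactically entails $\psi$; then there is an MCS $\omega'$ extending $\{\alpha:\expbel{i}\alpha\in\omega\}\cup\{\neg\psi\}$, so $s(\omega)\relstate{i}s(\omega')$ and the truth lemma fails for $\impbel{i}\psi$. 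The same leakage occurs for $\attract{i}{}\psi$: from $\attract{i}{}\psi\in\omega$ you cannot produce an $\alpha$ with $\expbel{i}(\alpha\rightarrow\rewa{i})\in\omega$ witnessing attraction to every $\psi$-state, because axiom \ref{ax:A4} only goes one way. Your proposal treats this as a bookkeeping matter, but it is the heart of the proof and your chosen model does not satisfy it.

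The paper's proof avoids this by not working directly in the MCS model. It first builds a \emph{finite} $\Sigma$-canonical quasi-model in which the accessibility relations are defined from the \emph{implicit} modalities (e.g.\ $E_\Sigma(i,w)=\{v:\forall\,\impbel{i}\phi\in w,\ \phi\in v\}$), so the truth lemma holds for the standard Kripke reasons. It then performs two transformations: a world-duplication step to align $RA$ with $A\cup\overline{E}$, and---crucially---an enrichment step that adds \emph{fresh} atoms $p_{i,w}$ and $q_{i,w}\rightarrow\rewa{i}$ to each $B(i,w)$ with $V(p_{i,w})=E(i,w)$ and $V(q_{i,w})=A(i,w)$. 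This forces the belief-base-induced relations to coincide exactly with the canonical ones, after which the passage to a genuine model in $\classbelbase$ is immediate. Without this fresh-variable enrichment (or an equivalent device), the converse direction of the truth lemma simply does not go through; you should either incorporate it explicitly or indicate that your $s(\omega)$ must be augmented in this way before $\iconstraint$ is formed.
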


\section{Cognitive positions}\label{sec:cognitivePositions}

It is interesting to study the 
language $\lang $ from a combinatorial perspective.
Specifically, it is worth 
combining  the four primitive  modalities 
$\attract{i}{} ,\repuls{i}{} ,\attractreal{i}{} $
and $\repulsreal{i}{} $
so as to define eight  cognitive
positions
of motivational type. 
In 
Table \ref{tab:cogatt}
we define the  four  notions  of
``being motivated by $\varphi$'' ($\motiv{i}\varphi$),
``being demotivated by $\varphi$'' ($\demotiv{i}\varphi$),
``being indifferent  about $\varphi$'' ($\indiff{i}\varphi$)
and 
``being ambivalent  about $\varphi$'' ($\ambi  {i}\varphi$).
The definiendum (e.g., $\motiv{i}\varphi$) is
defined by the conjunction 
of the two definiens (e.g., $\attract{i}{} \varphi \wedge
\neg \repuls {i}{} \varphi$). 
\begin{table}[h]
\centering
\begin{tabular}{c|c|c|}
  $\defin $   & $  \attract{i}{} \varphi$ & $ \neg  \attract{i}{} \varphi$ \\
    \hline
    $  \repuls{i}{} \varphi$ & $ \ambi{i}{} \varphi $ & $\demotiv{i}\varphi$ \\
    \hline
    $  \neg \repuls{i}{} \varphi$ & $\motiv{i}\varphi$ & $ \indiff{i}{} \varphi $ \\
    \hline
\end{tabular}
\caption{Cognitive attitudes  }
\label{tab:cogatt}
\end{table}

In 
Table \ref{tab:rcogatt},
we define 
their realistic counterparts: 
``being realistically  motivated by $\varphi$'' ($\realmotiv{i}\varphi$),
``being realistically  demotivated by $\varphi$'' ($\realdemotiv{i}\varphi$),
``being realistically
indifferent  about $\varphi$'' ($\rindiff{i}\varphi$)
and 
``being realistically
ambivalent  about $\varphi$'' ($\rambi  {i}\varphi$).

\begin{table}[h]
\centering
\begin{tabular}{c|c|c|}
   $\defin $  & $  \attractreal{i}{} \varphi$ & $ \neg  \attractreal{i}{} \varphi$ \\
    \hline
    $  \repulsreal{i}{} \varphi$ & $ \rambi {i}{} \varphi $ & $\realdemotiv{i}\varphi$ \\
    \hline
    $  \neg \repulsreal{i}{} \varphi$ & $\realmotiv{i}\varphi$ & $ \rindiff{i}{} \varphi $ \\
    \hline
\end{tabular}
\caption{Realistic cognitive attitudes  }
\label{tab:rcogatt}
\end{table}

From the definitions, it is easy to prove that 
being motivated
(resp. realistically motivated) 
implies not being demotivated
(resp. realistically demotivated) . 
\begin{proposition}\label{prop:motdemot}
Let $i\in \AGT$. Then,
\begin{align}
    & \models\motiv{i}\varphi \rightarrow \neg \demotiv{i}\varphi\label{motdemot1}, \\
    & \models\realmotiv{i}\varphi \rightarrow \neg \realdemotiv{i}\varphi. \label{motdemot2}
\end{align}
\end{proposition}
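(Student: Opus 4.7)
The plan is to observe that both claims follow by pure propositional reasoning once the macros from Tables \ref{tab:cogatt} and \ref{tab:rcogatt} are unfolded; no semantic argument about $\relattr{i}$, $\relrepu{i}$ or $\relstate{i}$ is required, so none of the axioms \ref{ax:A1}--\ref{ax:R3} enters the proof.

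For (\ref{motdemot1}), I would first rewrite the antecedent and the negated consequent using the definitions from Table \ref{tab:cogatt}:
\begin{align*}
\motiv{i}\varphi & \ \lequiv\ \attract{i}{}\varphi \wedge \neg \repuls{i}{}\varphi, \\
\demotiv{i}\varphi & \ \lequiv\ \neg \attract{i}{}\varphi \wedge \repuls{i}{}\varphi.
\end{align*}
From the first conjunct of $\motiv{i}\varphi$ we get $\attract{i}{}\varphi$, which refutes the first conjunct $\neg \attract{i}{}\varphi$ of $\demotiv{i}\varphi$; therefore $\neg\demotiv{i}\varphi$. (Alternatively, one could notice that the second conjunct $\neg\repuls{i}{}\varphi$ of $\motiv{i}\varphi$ already refutes the second conjunct of $\demotiv{i}\varphi$; either conjunct alone suffices.)

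For (\ref{motdemot2}), I would repeat exactly the same argument with the realistic operators, using Table \ref{tab:rcogatt}: substitute $\attractreal{i}{}$ for $\attract{i}{}$ and $\repulsreal{i}{}$ for $\repuls{i}{}$ throughout. The propositional step is identical.

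The main ``obstacle'' is really just bookkeeping: making sure that the definitions in the two tables are read off correctly (the definiendum is the conjunction of the row label with the column label), and noting that Proposition \ref{prop:motdemot} is established at the purely definitional level, independently of the belief-base semantics. In particular, the converse implications do not hold, since one can simultaneously have $\neg\attract{i}{}\varphi$ and $\neg\repuls{i}{}\varphi$ (the case $\indiff{i}\varphi$), so the proposition is the strongest implication of this shape available.
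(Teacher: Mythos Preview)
Your argument is correct and matches the paper's own proof: both simply unfold the definitions from Tables~\ref{tab:cogatt} and~\ref{tab:rcogatt} and observe that $\motiv{i}\varphi$ and $\demotiv{i}\varphi$ are propositionally inconsistent. The only cosmetic difference is that the paper uses the conjunct $\neg\repuls{i}{}\varphi$ to derive $\neg\demotiv{i}\varphi$, whereas you use $\attract{i}{}\varphi$ (and note the alternative yourself).
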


We leverage the  notions of being motivated/demotivated and their realistic variants 
to define two notions of dyadic preference between facts:
\begin{align*}
 \psi  \prec_i  \varphi& \defin 
(\motiv{i}\varphi  \wedge  \neg \motiv{i}\psi ) \vee 
(\demotiv{i}\psi   \wedge  \neg \demotiv{i}\varphi  ) , \\
 \psi  \prec_i^{\mathsf{real}}  \varphi & \defin 
(\realmotiv{i}\varphi  \wedge  \neg \realmotiv{i}\psi ) \vee 
(\realdemotiv{i}\psi   \wedge  \neg \realdemotiv{i}\varphi  ) . 
\end{align*}
The abbreviation $\psi  \prec_i  \varphi$
is read ``agent $i$
prefers $\varphi $
to $\psi$'', while 
 $ \psi  \prec_i^{\mathsf{real}}  \varphi$
is read ``agent $i$
realistically prefers $\varphi $ to $\psi$''. 
An agent prefers $\varphi$
to $\psi$
if
either i) she  is motivated 
by $\varphi$
without being motivated
by $\psi$, 
or  ii) she  is demotivated
by $\psi$
without being demotivated
by $\varphi$. 
The realistic preference
relation is defined analogously. 
It is straightforward to show
that both preference relations are strict partial orders.
\begin{proposition}\label{prefprop}
    Let $\jokertwo \in \{\prec_i ,  \prec_i^{\mathsf{real}} \}$. Then,
    \begin{align}
        &\models  \neg  (\varphi   \jokertwo  \varphi )\label{propirr},\\
         &  \models  (\psi   \jokertwo  \varphi ) \rightarrow \neg  (\varphi   \jokertwo  \psi) \label{propasy},\\
          &\models   \big( (\varphi_1   \jokertwo  \varphi_2 )
          \wedge (\varphi_2   \jokertwo  \varphi_3 )\big) 
          \rightarrow (\varphi_1   \jokertwo \varphi_3 ) \label{proptrans}.
    \end{align}
\end{proposition}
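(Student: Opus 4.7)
The plan is to observe that the argument for $\prec_i$ and for $\prec_i^{\mathsf{real}}$ is the same modulo a uniform replacement of $\motiv{i}$, $\demotiv{i}$ by $\realmotiv{i}$, $\realdemotiv{i}$, so that the only non-propositional ingredient needed is Proposition~\ref{prop:motdemot}: equation (\ref{motdemot1}) for $\prec_i$ and equation (\ref{motdemot2}) for $\prec_i^{\mathsf{real}}$. I will therefore carry out the three items for $\prec_i$; the realistic case is read off by substitution.

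For irreflexivity (\ref{propirr}), I simply unfold: $\varphi\prec_i\varphi$ becomes $(\motiv{i}\varphi\wedge\neg\motiv{i}\varphi)\vee(\demotiv{i}\varphi\wedge\neg\demotiv{i}\varphi)$, which is propositionally false. For asymmetry (\ref{propasy}), assume both $\psi\prec_i\varphi$ and $\varphi\prec_i\psi$ and split each into its two disjuncts, yielding four cases. Two of these (same-type disjunct on both sides) produce a direct propositional contradiction such as $\motiv{i}\varphi\wedge\neg\motiv{i}\varphi$. The remaining two mix motivation with demotivation on the same formula, e.g.\ $\motiv{i}\varphi$ together with $\demotiv{i}\varphi$; here I invoke (\ref{motdemot1}) of Proposition~\ref{prop:motdemot} to rule them out.

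Transitivity (\ref{proptrans}) is handled by the same style of case split on the disjuncts witnessing $\varphi_1\prec_i\varphi_2$ and $\varphi_2\prec_i\varphi_3$, giving four subcases. The two ``parallel'' cases both contradict themselves at $\varphi_2$ (one yields $\motiv{i}\varphi_2\wedge\neg\motiv{i}\varphi_2$, the other yields $\demotiv{i}\varphi_2\wedge\neg\demotiv{i}\varphi_2$), so they are vacuous. The case where $\varphi_1\prec_i\varphi_2$ is witnessed by $\motiv{i}\varphi_2\wedge\neg\motiv{i}\varphi_1$ and $\varphi_2\prec_i\varphi_3$ by $\demotiv{i}\varphi_2\wedge\neg\demotiv{i}\varphi_3$ gives $\motiv{i}\varphi_2\wedge\demotiv{i}\varphi_2$, contradicting (\ref{motdemot1}). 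The only genuinely substantive case is the remaining cross one: $\demotiv{i}\varphi_1\wedge\neg\demotiv{i}\varphi_2$ together with $\motiv{i}\varphi_3\wedge\neg\motiv{i}\varphi_2$. From $\demotiv{i}\varphi_1$ and the contrapositive of (\ref{motdemot1}) I obtain $\neg\motiv{i}\varphi_1$; combined with $\motiv{i}\varphi_3$ this yields the left disjunct of $\varphi_1\prec_i\varphi_3$.

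The main obstacle is this last subcase of transitivity: one has to notice that Proposition~\ref{prop:motdemot} must be applied \emph{contrapositively} in order to bridge the motivated-side and demotivated-side witnesses, so that $\demotiv{i}\varphi_1$ supplies the missing $\neg\motiv{i}\varphi_1$. Everything else is propositional bookkeeping.
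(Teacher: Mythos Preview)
Your proposal is correct and follows essentially the same approach as the paper: unfold the definition of $\prec_i$, do a case split on the witnessing disjuncts, and invoke Proposition~\ref{prop:motdemot} (direct or contrapositive) precisely where a ``mixed'' case combines $\motiv{i}$ and $\demotiv{i}$ on the same formula. The only cosmetic difference is that for asymmetry the paper splits on the two disjuncts of the single assumption $\psi\prec_i\varphi$ and then rules out both disjuncts of $\varphi\prec_i\psi$, whereas you assume both preferences and do a four-way split; the case content is identical.
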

Validities (\ref{propirr}),
(\ref{propasy})
and 
(\ref{proptrans}) 
capture
irreflexivity,
asymmetry
and transitivity for preference, respectively. 
    
\section{Dynamic extension}\label{sec:dynext}

In this section we move from a static to a dynamic perspective on cognitive attitudes 
by presenting an extension of the language 
 $\lang $
 that supports reasoning about 
 the consequences
 of belief
change operations.
We see the latter as atomic  programs 
on an agent's cognitive state and use the constructs of dynamic logic for representing complex programs. 
Our  extension is defined by the following
grammar:
 \begin{center}\begin{tabular}{llcl}
  $\langprog $ & $ \defin $  $\pi $  & $\bnf$ & $ +_i \alpha \mid -_i \alpha
  \mid \pi ; \pi \mid \pi \cup \pi \mid ? \varphi ,
                        $\\
 $\langdyn $ & $ \defin $  $\varphi$  & $\bnf$ & $\alpha  \mid \neg\varphi \mid \varphi  \wedge \varphi   \mid  \impbel{i}\varphi 
\mid
  \attract{i}{ } \varphi
  \mid$ \\
  && & $\repuls{i}{ } \varphi
  \mid \attractreal{i}{}\varphi  \mid \repulsreal{i}{}\varphi
 \mid [\pi  ]\varphi,
                        $\
\end{tabular}\end{center} 
where $\alpha$ ranges over $\fraglang$
and $i $ ranges over $\AGT $.
 $\langprog $
is the language of programs. It includes atomic
programs for private belief expansion ($+_i \alpha $) and private forgetting ($-_i \alpha$)
as well as complex programs
of sequential composition ($;$),
non-deterministic choice ($\cup$) and test ($? $). 
The formula $[\pi  ]\varphi$
is read ``$\varphi$ holds after  program $\pi$ has been executed''.
The dual of the operator $[\pi  ]$ is defined in the usual way:
$\langle \pi  \rangle \varphi \defin \neg[\pi  ] \neg\varphi$.
The formula $\langle \pi  \rangle \varphi $
is read 
``there exists
an execution of the program
$\pi$ at the end of which $\varphi$ holds''.

To be able to interpret the language $\langdyn$
we extend the satisfaction relation of
Definition \ref{truthcond4}  as follows.
 \begin{definition}[Satisfaction relation, cont.]\label{truthconddynam}
Let  $ (\st ,\iconstraint)   \in\classbelbase$. We define:
\begin{eqnarray*}
             (\st ,\iconstraint) \models [\pi ] \varphi & \textif & 
\forall \st'  \in\iconstraint,
 \text{ if }
     \st  \reldyn{\pi  }^{\iconstraint}  \st' 
 \text{ then }\\
 &&
  (\st' ,\iconstraint)\models\varphi ; \text{ with }\\
    \st \reldyn{+_i \alpha }^{\iconstraint} \st' &\text{ iff } &
    \! \stateval=\stateval',
     \belbaseset_i^{+_i \alpha}= \belbaseset_i\cup \{\alpha\}
  \text{ and }\\
  && \forall j \neq i, 
  \belbaseset_j^{+_i \alpha}= \belbaseset_j,\\
       \st \reldyn{-_i \alpha }^{\iconstraint} \st'  &\text{ iff }&
    \! \stateval=\stateval',
     \belbaseset_i^{-_i \alpha}= \belbaseset_i\setminus \{\alpha\}
  \text{ and }\\
  && \forall j \neq i, 
  \belbaseset_j^{-_i \alpha}= \belbaseset_j,\\
         \st \reldyn{\pi_1 ; \pi_2 }^{\iconstraint} \st' & \text{ iff }&
    \exists \st''\in \iconstraint 
    \text{ such that  } 
    \st \reldyn{\pi_1  }^{\iconstraint} \st''\text{    and }\\
&&    \st'' \reldyn{\pi_2 }^{\iconstraint} \st' ,\\
         \st \reldyn{\pi_1 \cup \pi_2 }^{\iconstraint} \st' &\text{ iff }&
  \st \reldyn{\pi_1 }^{\iconstraint} \st'  \text{    or }
 \st \reldyn{ \pi_2  }^{\iconstraint} \st',\\
 \st \reldyn{?\varphi }^{\iconstraint} \st' &\text{ iff }& \st'=\st \text{ and }  (\st ,\iconstraint)\sat \varphi . 
\end{eqnarray*}
\end{definition}
 The dynamic modality  $[\pi ] $
 is interpreted in the expected way:
 $\varphi$ holds
 after program $\pi$
 is executed if
 $\varphi$
 holds at every state
 which is accessible
from the actual
one by executing program $\pi$. 
The atomic program $+_i \alpha$ 
for private belief expansion 
expands agent $i$'s belief
base with the formula $\alpha $,
while $-_i \alpha$
for private forgetting 
removes formula $\alpha $
from agent $i$'s belief
base. 
They are private operations since they keep
the belief bases of all agents different from $i$ unchanged. 
Sequential composition, non-deterministic choice
and test are interpreted in the usual way. 
Note that the  binary  relation $\reldyn{\pi }^{\iconstraint}$
is 
parameterized with  $\iconstraint$
to recall the context with respect
to which the formulas have to be evaluated. 
Let us illustrate the dynamic language  $\langdyn $.
with an  example. 

\begin{example}
\label{example:mother}
  A mother enters the room
of her child   Bob
after hearing a loud noise coming from there.
She sees that there is a big mess in the room. 
The mother's goal 
is  to motivate Bob  to tidy up the room. 
She has to choose the right combination
of speech acts to achieve her goal.
We suppose the mother's  repertoire of speech acts
includes the following three speech acts: 
SA1: ``You won't be allowed to watch TV this afternoon,
if you do not tidy up your room!'';
SA2: ``I'll prepare some good chocolate crepes for you, 
if you tidy up your
room!''; 
SA3: ``Don't worry, you won't necessarily get tired from tidying up the room, 
it will only take a few minutes!''.   Moreover, we suppose 
Bob has the following information in his belief base: i)
tidying up the room ($  \mathit{td}_{\mathit{Bob }}$)  is a tiring activity
($  \mathit{ti}_{\mathit{Bob }}$),
ii)  being tired is
a bad thing,
iii)  a crepe-based snack
($  \mathit{cr}_{\mathit{Bob }}$)
is a good thing, 
iv) 
being deprived of TV ($\neg  \mathit{tv}_{\mathit{Bob }}$) is a bad thing. 
In formal terms, let
$\AGT= \{ \mathit{Bob}\}$. 
The initial state is  $\st_0 = \big(  
  \belbaseset_{\mathit{Bob}}, \stateval_0 \big)$ with 
  \begin{align*}
       \belbaseset_{\mathit{Bob}}=&
      \{ 
      \mathit{td}_{\mathit{Bob} }
      \rightarrow    \mathit{ti}_{\mathit{Bob} }, 
   \mathit{ti}_{\mathit{Bob} }  \rightarrow \punis{\mathit{Bob}},  \\
   & \mathit{cr}_{\mathit{Bob} }
      \rightarrow    \rewa{\mathit{Bob} }, 
      \neg \mathit{tv}_{\mathit{Bob} } 
      \rightarrow    \punis{\mathit{Bob} } 
      \} 
  \end{align*}
  and 
  $ \stateval_0= \emptyset $. 
The speech acts SA1, SA2 and SA3 are represented  formally
by the atomic programs 
$+_{\mathit{Bob}}   \alpha_1$,
$+_{\mathit{Bob}}   \alpha_2 $ and 
 $ -_{\mathit{Bob}}   \alpha_3  $
 with 
$ \alpha_{1}  \defin 
\neg \mathit{td}_{\mathit{Bob}} 
  \rightarrow 
   \neg \mathit{tv}_{\mathit{Bob}} $,
   $
    \alpha_{2}  \defin 
 \mathit{td}_{\mathit{Bob}} 
  \rightarrow 
    \mathit{cr}_{\mathit{Bob}} $ and 
$    \alpha_{3}  \defin 
 \mathit{td}_{\mathit{Bob}}
  \rightarrow 
   \mathit{ti}_{\mathit{Bob}}$. 
   It is routine  to verify that:
  \begin{align}
   (\st_0,\setbelbasecustom   ) \models  ( \mathit{td}_{\mathit{Bob}}
 \prec_{\mathit{Bob}}^{\mathsf{real}} \neg  \mathit{td}_{\mathit{Bob}}), \label{check1}
  \end{align}
with 
$\Gamma=( 
\belbaseset_{\mathit{Bob}} )$. 
  This means that at the initial state  $\st_0$
  Bob realistically prefers 
  not tidying up his room
  to tidying it up. Moreover, 
it is routine  to verify that:
  \begin{align}
   (\st_0,\setbelbasecustom) \models 
  [ \pi_{\mathit{talk}}]    (\neg \mathit{td}_{\mathit{Bob}}
 \prec_{\mathit{Bob}}^{\mathsf{real}}  \mathit{td}_{\mathit{Bob}}), 
  \end{align}
with 
 $\pi_{\mathit{talk}} \defin  \bigcup_{\epsilon, \epsilon' \in \{+_{\mathit{Bob}}   \alpha_1,
  +_{\mathit{Bob}}   \alpha_2, -_{\mathit{Bob}}   \alpha_3\}
  \suchthat \epsilon\neq \epsilon' 
  }
  \epsilon; \epsilon' $.
  This means any sequence of two different speech acts
  from the mother's speech act repertoire 
  is sufficient to reverse  Bob's
    realistic preference
    and, consequently, to make
    the mother achieve her goal. 
  
Note that
the example can be generalized  to an arbitrary
set of  children  $\AGT=\{1, \ldots , n\}$
by  replacing the information
``tidying up the room is a tiring activity'' by the information
``tidying up the room
\emph{without the help of the others} is a
tiring activity'' 
in the belief base of each child
and keeping everything else the same
as in the single-agent case. 
In particular, 
in the general case
we should consider the initial state  $\st_0 = \big( (\belbaseset_i)_{i \in \AGT } , \stateval_0 \big)$
such that $
    \belbaseset_i= 
      \big\{ 
     (   \bigwedge_{ j \neq i } \neg \mathit{td}_{j } \land \mathit{td}_{i} )
      \rightarrow    \mathit{ti}_{i},
      \mathit{ti}_{i} \rightarrow \punis{i}, 
           \mathit{cr}_{i}
      \rightarrow    \rewa{i }, 
      \neg \mathit{tv}_{i}
      \rightarrow    \punis{i } 
      \big\}$ 
  and 
  $ \stateval_0= \emptyset $. 
  Then,
we should check the following
statement instead of  the statement 
(\ref{check1}) given above: 
  \begin{align}
   (\st_0,\setbelbasecustom   ) \models 
   \bigwedge_{i \in \AGT } 
\big(  & (  \bigwedge_{ j \neq i } \neg \mathit{td}_{j } \land \mathit{td}_{i}  ) 
 \prec_{i }^{\mathsf{real}} \notag   
 \\
&  ( \mathit{td}_{i}\rightarrow  \bigvee_{ j \neq i }  \mathit{td}_{j }   ) \big) , \label{check3}
  \end{align} 
with 
$\setrelevantformulas = (\belbaseset_i )_{i \in \AGT}$. 
The formula to be checked expresses the fact that at the initial state every child
realistically prefers 
having someone else helping them to tidy up the room
to tidying up the room on their own. 
\end{example}

Note that   
the set $\setbelbasecustom  $
in the  Example \ref{example:mother}
 corresponds to the state space in a Kripke model used to represent the scenario.
 It is clearly exponential in the size of $\Gamma$. But
 $\setbelbasecustom  $
 and  Bob's accessibility relations do not need to be explicitly represented in the formulation of the
 model checking problem. We only need to represent the initial state and the set $\Gamma$. So, we can represent the example in an exponentially more succinct way than in the traditional Kripke semantics
 of epistemic logic.

 The next section is devoted to 
exploring the model checking 
problem for the language $\langdyn$
from a
complexity and 
algorithmic perspective. 
The generalization
of the example to an arbitrary set of agents
will turn out to be useful since we will use it to
test
the performance of our model checking algorithm 
as a function of the  number
of children.

\section{Model checking}%
\label{sec:mc}
\newcommand\mc[3]{mc(#1, #2, #3)}

The model checking problem for $\langdyn$ is defined as follows: given 
    an agent vocabulary profile $\setrelevantformulas = (\setrelevantformulas_i)_{i \in \AGT}$ with $\setrelevantformulas_i$ finite for every $i\in \AGT$,
    a finite state $\st_0$ in $\setbelbasecustom$,
    and
    a formula $\varphi_0 \in \langdyn$, decide whether $(\st_0,\setbelbasecustom) \models \varphi_0$ 
    holds.

\begin{theorem}
    The model checking problem for $\langdyn$ is  PSPACE-complete.
\end{theorem}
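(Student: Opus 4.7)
The plan is to establish PSPACE membership by a polynomial-space recursive algorithm (which can be recast as the polynomial-size reduction to TQBF announced by the authors) and PSPACE-hardness by a reduction from TQBF. The key observation for membership is that any state $\st = ((\belbaseset_i)_{i \in \AGT}, \stateval) \in \setbelbasecustom$ is storable in polynomial space: each $\belbaseset_i$ is a subset of the finite set $\Gamma_i$ and can be encoded as a bit vector, while $\stateval$ only needs to record the atoms occurring in $\st_0$, $\varphi_0$ and $\Gamma$. Consequently $\setbelbasecustom$, though of exponential size, can be enumerated element by element using polynomial working memory, exactly as in QBF evaluation.

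For the upper bound I design a recursive procedure $\mathit{MC}(\st, \varphi)$ together with an auxiliary $\mathit{MCProg}(\st, \pi, \varphi)$. Boolean cases, atomic cases and explicit belief $\expbel{i}\alpha$ (a membership test in $\belbaseset_i$) are immediate. For each static modality in $\{\impbel{i}, \attract{i}{}, \repuls{i}{}, \attractreal{i}{}, \repulsreal{i}{}\}$, the procedure enumerates candidate successors $\st' \in \setbelbasecustom$ one at a time, checks the accessibility condition from Definitions \ref{DefAlternative} and \ref{DefAttractiveRepulsiveAlternative} in polynomial time by inspecting $\belbaseset_i$, $\desbaseset{i}(\st)$ and $\undesbaseset{i}(\st)$, and recurses on the scope at $\st'$. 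The dynamic modality is handled by PDL-style dispatch: for $+_i \alpha$ and $-_i \alpha$ the (unique) updated state is computed explicitly and the recursion continues with $\varphi$; $[\pi_1; \pi_2]\varphi$ is handled by calling $\mathit{MCProg}(\st, \pi_1, [\pi_2]\varphi)$; $[\pi_1 \cup \pi_2]\varphi$ is the conjunction of two recursive calls; and $[?\psi]\varphi$ reduces to $\mathit{MC}(\st, \psi \rightarrow \varphi)$. The formula argument of each recursive call is a subformula of $\varphi_0$ (the composition rewrite does not enlarge it), the recursion depth is $O(|\varphi_0|)$, and each level uses polynomial local storage, so the whole computation fits in polynomial space.

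For the lower bound I reduce from TQBF. Given $\Phi = Q_1 x_1 \ldots Q_n x_n \psi(x_1, \ldots, x_n)$, take $\AGT = \{1\}$, choose fresh atoms $p_1, \ldots, p_n \in \PROP$, set $\Gamma_1 = \{p_1, \ldots, p_n\}$, and let $\st_0$ be the state with empty belief base and empty valuation. Translate $\Phi$ into $\varphi_0 = Q_1^{\star} \cdots Q_n^{\star} \psi^{\star}$, where $\psi^{\star}$ replaces every $x_i$ by $\expbel{1} p_i$, and $Q_i^{\star}$ is $[+_1 p_i \cup -_1 p_i]$ if $Q_i = \forall$ and $\langle +_1 p_i \cup -_1 p_i \rangle$ if $Q_i = \exists$. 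Executing $+_1 p_i \cup -_1 p_i$ nondeterministically toggles whether $p_i$ belongs to the successor's belief base, so after the full prefix the reachable belief bases range over all subsets of $\{p_1, \ldots, p_n\}$, and $\expbel{1} p_i$ at such a successor records exactly the Boolean value chosen for $x_i$. A routine induction on the quantifier prefix then yields $(\st_0, \setbelbasecustom) \models \varphi_0$ iff $\Phi$ is true. The reduction is clearly polynomial, which establishes PSPACE-hardness.

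The hardest part is the polynomial-space bookkeeping in the upper bound: one must ensure that enumeration over $\setbelbasecustom$ inside deeply nested static modalities, and the chaining of sequential compositions inside dynamic modalities, reuse working memory rather than stack exponentially, and that the formula argument passed along the recursion never exceeds $|\varphi_0|$. Both points are guaranteed by the dispatch discipline described above and amount to the standard alternation argument underlying the PSPACE upper bound for QBF; once this is done, the translation to an equivalent polynomial-size TQBF instance is direct.
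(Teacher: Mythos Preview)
Your argument is correct on both halves, and for the upper bound it is essentially the same as the paper's: the recursive polynomial-space procedure you describe is the ``operational'' counterpart of the explicit TQBF translation the authors spell out in Definition~\ref{definition:reductiontoTQBF}, and you even note this equivalence yourself. One small imprecision: the formulas you pass along in the $[\pi_1;\pi_2]$ and $[?\psi]$ cases are not literally subformulas of $\varphi_0$; what you need (and what holds) is that their size stays bounded by $|\varphi_0|$ and that the recursion depth is bounded by the structural size of $\varphi_0$, which is enough for the PSPACE bound.

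Where you genuinely diverge from the paper is the lower bound. The paper does not reduce from TQBF at all: it simply invokes the known PSPACE-hardness of model checking for the single-agent static fragment with only the implicit-belief operator $\impbel{i}$ (from \cite{DBLP:journals/corr/abs-1907-09114}). Your reduction instead exploits the dynamic operators, encoding quantifier choices via $[+_1 p_i \cup -_1 p_i]$ and $\langle +_1 p_i \cup -_1 p_i\rangle$ and reading off the assignment through $\expbel{1} p_i$. This is correct and pleasantly self-contained, but it proves slightly less than the paper's citation does: the paper's route shows that hardness already holds for the purely static $\impbel{i}$-fragment, whereas your reduction needs the full dynamic machinery. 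Conversely, your argument has the merit of not depending on an external reference and of making the connection between the program constructs and QBF quantifiers completely explicit.
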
 

PSPACE-hardness comes from the PSPACE-hardness of the
model checking for the single-agent fragment 
of $\lang $ with only the
implicit belief operator $\impbel {i } $ 
\cite{DBLP:journals/corr/abs-1907-09114}.
PSPACE-membership comes from the poly-time reduction into TQBF (true quantified binary formulas) that follows.

\subsection{Reduction into TQBF}
\label{subsection-reductiontoTQBF}

\newcommand{\propfortriangle}[2]{x_{#1, #2}}
\newcommand{\setQBFvariableslevel}[1]{X_{#1}}

\newcommand{\QBFrel}{E}
\newcommand\trrel[2]{tr_{#1, #2}^{\text{prog}}}
\newcommand{\QBFsymbol}{s}
\newcommand{\QBFfreshsymbol}{{s''}}
\newcommand{\QBFsymbolb}{{s'}}

We use 
abstract symbols
$\QBFsymbol$ (the reader may think of them as integers) to denote \emph{depths} in the translation of a $\lang$-formula into a QBF-formula.
We introduce TQBF propositional variables $\propfortriangle \alpha \QBFsymbol$ for all $\alpha \in \fraglang$ and for all symbols $\QBFsymbol$.
The variables indexed by $\QBFsymbol$ are said to be of level~$\QBFsymbol$.

Let $\setQBFvariableslevel {\QBFsymbol}$ be the set containing exactly 
all formulas 
$\propfortriangle {\expbel i \alpha} \QBFsymbol$
$\propfortriangle {\expbel i (\alpha \to \rewa i)} \QBFsymbol$
and $\propfortriangle {\expbel i (\alpha \to \punis i)} \QBFsymbol$
with $\alpha \in \setrelevantformulas_i$ for any agent~$i$,
and all $\propfortriangle p \QBFsymbol$ with $p$ appearing in $\setrelevantformulas$ or~$\supervarphi$.

Let us give some macros that will be used in the reduction.

\newcommand{\sameprop}[2]{\text{eq}_{\text{prop}}(#1, #2)}
\newcommand{\samej}[2]{\text{eq}_j(#1, #2)}
\newcommand{\sameplusi}[3]{\text{eq}_i^{+#3}(#1, #2)}
\newcommand{\sameminusi}[3]{\text{eq}_i^{-#3}(#1, #2)}

\begin{itemize}
    \item The valuations of the two states are equal:
    $$\sameprop\QBFsymbol\QBFsymbolb := \lbigand_p (\propfortriangle \prop \QBFsymbol \lequiv \propfortriangle \prop \QBFsymbolb).$$
    \item Agent $j$'s  belief bases in the two states are equal:
    $$\samej\QBFsymbol\QBFsymbolb := \lbigand_{\alpha \in \Gamma_j} (\propfortriangle {\expbel{j}\alpha} \QBFsymbol \lequiv \propfortriangle {\expbel{j}\alpha} \QBFsymbolb).$$
    \item Agent $i$'s  belief bases in the two states are equal, except for $\alpha$ which is added in the second state:
\begin{align*}
   \sameplusi\QBFsymbol \QBFsymbolb\alpha :=  x_{\expbel{i}\alpha, s'}  \land  
\lbigand_{\beta \in \Gamma_i \suchthat \beta \neq \alpha} \!\!\! (\propfortriangle {\expbel{i}\beta} \QBFsymbol \lequiv \propfortriangle {\expbel{i}\beta} \QBFsymbolb).
\end{align*}
    \item Agent $i$'s belief bases in the two states are equal, except for $\alpha$ which is removed from the second state:
\begin{align*}
\sameminusi\QBFsymbol \QBFsymbolb\alpha :=  \lnot x_{\expbel{i}\alpha, s'}  \land  \lbigand_{\beta \in \Gamma_i \suchthat \beta \neq \alpha} \!\!\!\! (\propfortriangle {\expbel{i}\beta} \QBFsymbol \lequiv \propfortriangle {\expbel{i}\beta} \QBFsymbolb).
\end{align*}
\end{itemize}

We now give the translation from $\lang$ to QBF. In this definition, we write $\forall X_s$ for the sequence $\forall x_1 \dots \forall x_m$ where $x_1, \dots, x_m$ is any\footnote{The semantics does not depend on the particular order.} enumeration of the variables in $X_s$.

\begin{definition}
\label{definition:reductiontoTQBF}
For all state symbols $s, s'$, we define a function $tr_s$ that maps any formula of 
$\lang$ to a QBF-formula, and a function 
$\trrel{s}{s'}$ that maps any program $\pi$ 
to a a QBF-formula by mutual induction as follows:
\fbox{
\begin{minipage}{0.45\textwidth}%
\vspace*{-3mm}
\begin{align*}
    tr_\QBFsymbol(\prop) & := \propfortriangle \prop \QBFsymbol \\
   tr_\QBFsymbol(\lnot\varphi) & := \lnot tr_\QBFsymbol(\varphi) \\
    tr_\QBFsymbol(\varphi \land \psi) & := tr_\QBFsymbol(\varphi) \land tr_\QBFsymbol(\psi) \\
        tr_\QBFsymbol(\expbel{i}\alpha) & :=
    \begin{cases}
        \propfortriangle{\expbel{i}\alpha}{\QBFsymbol}, & \text{ if } \alpha \in \Gamma_i\\
        \bot, & \text{ otherwise}
    \end{cases}\\
    tr_\QBFsymbol(\impbel{i}\varphi) & := \forall \setQBFvariableslevel{\QBFfreshsymbol}(\QBFrel_{i,\QBFsymbol, \QBFfreshsymbol} \limply tr_{\QBFfreshsymbol}(\varphi)) \\
    tr_\QBFsymbol(\attract{i}{} \varphi) & := \forall \setQBFvariableslevel{\QBFfreshsymbol}(tr_\QBFfreshsymbol(\varphi) \limply A_{i,\QBFsymbol, \QBFfreshsymbol} ) \\
    tr_\QBFsymbol(\repuls{i}{} \varphi) & := \forall \setQBFvariableslevel{\QBFfreshsymbol}(tr_\QBFfreshsymbol(\varphi) \limply R_{i,\QBFsymbol, \QBFfreshsymbol} ) \\
    tr_\QBFsymbol(\attractreal{i}{} \varphi) & := \forall \setQBFvariableslevel{\QBFfreshsymbol}(tr_\QBFfreshsymbol(\varphi) \land E_{i,\QBFsymbol, \QBFfreshsymbol}  \limply A_{i,\QBFsymbol, \QBFfreshsymbol} ) \\
    tr_\QBFsymbol(\repulsreal{i}{} \varphi) & := \forall \setQBFvariableslevel{\QBFfreshsymbol}(tr_\QBFfreshsymbol(\varphi) \land E_{i,\QBFsymbol, \QBFfreshsymbol}  \limply R_{i,\QBFsymbol, \QBFfreshsymbol} ) \\
    tr_\QBFsymbol([\pi]\varphi) & := \forall X_{\QBFfreshsymbol} (\trrel \QBFsymbol {\QBFfreshsymbol} (\pi) \limply tr_{\QBFfreshsymbol}(\varphi) )
\end{align*}
\end{minipage}
}

with 
\begin{align*}
    \QBFrel_{i,\QBFsymbol, \QBFfreshsymbol} & := \bigwedge_{\alpha \in \setrelevantformulas_i} \propfortriangle{\expbel i \alpha}\QBFsymbol  \limply tr_{\QBFfreshsymbol}(\alpha),
    \\
    A_{i,\QBFsymbol, \QBFfreshsymbol} & :=
    \bigor_{(\alpha \rightarrow \rewa{i}) \in \setrelevantformulas_i} 
    \propfortriangle{\expbel i (\alpha \rightarrow \rewa{i})}\QBFsymbol  \land tr_{\QBFfreshsymbol}(\alpha),
    \\    
    R_{i,\QBFsymbol, \QBFfreshsymbol} & := 
    \bigor_{(\alpha \rightarrow \punis{i}) \in \setrelevantformulas_i} 
    \propfortriangle{\expbel i (\alpha \rightarrow \punis{i})}\QBFsymbol  \land tr_{\QBFfreshsymbol}(\alpha),
\end{align*}
and $\trrel{s}{s'}$ is given by

\noindent
 \fbox{
\begin{minipage}{0.45\textwidth}%
\vspace*{-2mm}
\begin{align*}
    \trrel  \QBFsymbol \QBFsymbolb (+_i \alpha)  := &   \bigwedge_{j \neq i} \samej s {s'}  \\
    &  \land \sameplusi s {s'}\alpha \land \sameprop s {s'} \\
      \trrel \QBFsymbol \QBFsymbolb (-_i \alpha)  := &  \bigwedge_{j \neq i} \samej s {s'} \\
      &  
      \land \sameminusi s {s'}\alpha \land \sameprop s {s'} \\
  \trrel \QBFsymbol \QBFsymbolb (\pi_1; \pi_2)  := & \exists X_{\QBFfreshsymbol} (\trrel \QBFsymbol \QBFfreshsymbol (\pi_1) \land \trrel {\QBFfreshsymbol} {\QBFsymbolb}(\pi_2)) \\
  \trrel \QBFsymbol \QBFsymbolb (\pi_1 \union \pi_2)  := &    \trrel \QBFsymbol \QBFsymbolb (\pi_1) \lor   \trrel \QBFsymbol \QBFsymbolb (\pi_2) \\
  \trrel \QBFsymbol \QBFsymbolb (?\varphi)  := & \bigwedge_j  \samej\QBFsymbol\QBFsymbolb \land  
  \sameprop \QBFsymbol\QBFsymbolb \land tr_\QBFsymbol(\varphi)
\end{align*}
\end{minipage}
}

where $\QBFfreshsymbol$ in the clauses $tr_\QBFsymbol(\impbel{i}\varphi)$, $tr_\QBFsymbol(\attract{i}{}\varphi)$, $tr_\QBFsymbol(\repuls{i}{}\varphi)$, $tr_\QBFsymbol(\attractreal{i}{}\varphi)$, $tr_\QBFsymbol(\repulsreal{i}{}\varphi)$,
$tr_\QBFsymbol([\pi]\varphi)$ and $\trrel \QBFsymbol \QBFsymbolb (\pi_1; \pi_2)$ is each time a fresh abstract symbol (fresh means that this symbol was not used before).
\end{definition}

State $S$ (resp. $S'$) is represented by (the truth values of variables in) $X_\QBFsymbol$ (resp. $X_{\QBFfreshsymbol}$). Given a state symbol $s$, given a formula $\varphi$ of $\langdyn$, the intended meaning of $tr_s(\varphi)$ is that formula $\varphi$ holds in the state $S$ represented by the truth values of variables in $X_\QBFsymbol$.
The translation of $\expbel{i}\alpha$ is $\bot$ when $\alpha$ is not in the corresponding belief base.
The translation $tr_{\QBFsymbol}(\impbel{i}\varphi)$ directly reflects the semantics of $\impbel{i}\varphi$; the same for the other operators $\attract{i}{}$, $\repuls{i}{}$, $\attractreal{i}{}$, $\repulsreal{i}{}$.
Formula $\QBFrel_{i,\QBFsymbol,\QBFfreshsymbol}$ reformulates  $\st \relstate i \st'$, and similarly $A_{i,\QBFsymbol, \QBFfreshsymbol}$ and $R_{i,\QBFsymbol, \QBFfreshsymbol}$ reformulate $\st \relattr i \st'$ and $\st \relrepu i \st'$ respectively.

For the translation $tr_\QBFsymbol([\pi]\varphi)$, we use $\trrel \QBFsymbol {\QBFfreshsymbol} (\pi)$ which is true iff $ \st  \reldyn{\pi  }^{\iconstraint}  \st' $ where $S$ and $S'$ are respectively represented by $X_\QBFsymbol$ and $X_\QBFsymbolb$. The definition of $\trrel \QBFsymbol {\QBFfreshsymbol} (\pi)$  reflects Definition~\ref{truthconddynam}.

\newcommand{\descr}[2]{\operatorname{desc}_{#1}(#2)}

In the reduction, we also use formula $\descr{\st_\QBFsymbol}{\setQBFvariableslevel\QBFsymbol}$ which expresses that variables in $X_s$ represent a given state~$S_0$:
\begin{align*}
    \descr{\st_0}{\setQBFvariableslevel\QBFsymbol} := & 
         \bigwedge_{i\in\AGT} \left(
            \bigwedge_{\alpha \in B_i} \propfortriangle{\expbel i \alpha}s
            \land
            \bigwedge_{\alpha \in \setrelevantformulas_i \setminus B_i} \lnot 
                \propfortriangle{\expbel i \alpha}\QBFsymbol
        \right) \\
        &
        \land
        \bigwedge_{p \in V} \propfortriangle p \QBFsymbol
        \land
        \bigwedge_{p \not \in V} \lnot \propfortriangle p \QBFsymbol. 
\end{align*}

The reduction from the model checking for  $\langdyn$ into TQBF is given in the following proposition.

\begin{proposition}\label{prop:traduction}
\label{prop:tqbftrans}
Let $\varphi_0 \in \langdyn$
and $\st_0$ be a state. We have $(\st_0,\setbelbasecustom) \models \varphi_0$  iff $\exists \setQBFvariableslevel\QBFsymbol (
            \descr{\st_0}{\setQBFvariableslevel\QBFsymbol} \land tr_\QBFsymbol(\varphi_0)
        )$ is QBF-true.
\end{proposition}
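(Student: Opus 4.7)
The plan is to prove a stronger statement by mutual structural induction on the parse tree of formulas $\varphi \in \langdyn$ and programs $\pi \in \langprog$. Specifically, I would establish the following encoding lemma: for every QBF truth assignment $\sigma$, if $\sigma$ restricted to $\setQBFvariableslevel\QBFsymbol$ encodes a state $S \in \setbelbasecustom$, in the sense that $\sigma(\propfortriangle{\expbel i \alpha}\QBFsymbol) = 1$ iff $\alpha \in B_i$ for every $\alpha \in \setrelevantformulas_i$ and $\sigma(\propfortriangle p \QBFsymbol) = 1$ iff $p \in V$, then (a) $\sigma \models tr_\QBFsymbol(\varphi)$ iff $(S, \setbelbasecustom) \models \varphi$, and (b) if additionally $\sigma$ restricted to $\setQBFvariableslevel\QBFsymbolb$ encodes $S' \in \setbelbasecustom$, then $\sigma \models \trrel \QBFsymbol \QBFsymbolb (\pi)$ iff $S \reldyn{\pi}^{\setbelbasecustom} S'$. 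Once this lemma is established, the proposition is immediate: the conjunct $\descr{\st_0}{\setQBFvariableslevel\QBFsymbol}$ selects exactly the valuation of $\setQBFvariableslevel\QBFsymbol$ that encodes $\st_0$, and applying (a) at that valuation yields the equivalence.

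For the induction itself, atomic and Boolean cases are immediate from the encoding relation. The clause $tr_\QBFsymbol(\expbel{i}\alpha) = \bot$ when $\alpha \notin \setrelevantformulas_i$ is correct because any state in $\setbelbasecustom$ satisfies $B_i \subseteq \setrelevantformulas_i$. For the modal cases, I would verify that the three macros $\QBFrel_{i,\QBFsymbol,\QBFfreshsymbol}$, $A_{i,\QBFsymbol,\QBFfreshsymbol}$, and $R_{i,\QBFsymbol,\QBFfreshsymbol}$ are true at $\sigma$ exactly when, respectively, $S \relstate i S''$, $S \relattr i S''$, and $S \relrepu i S''$ hold between the states encoded by $\setQBFvariableslevel\QBFsymbol$ and $\setQBFvariableslevel\QBFfreshsymbol$; this is a direct unfolding of Definitions~\ref{DefAlternative}, \ref{appaverdb}, and \ref{DefAttractiveRepulsiveAlternative}, together with the inductive hypothesis applied to the inner $tr_\QBFfreshsymbol(\alpha)$. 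The universal quantifier $\forall \setQBFvariableslevel\QBFfreshsymbol$ then ranges over all encodings of states in $\setbelbasecustom$, so the truth conditions of $\impbel i$, $\attract i {}$, $\repuls i {}$, $\attractreal i {}$, and $\repulsreal i {}$ from Definition~\ref{truthcond4} are faithfully captured. For programs, the cases $+_i \alpha$ and $-_i \alpha$ follow from the construction of $\sameplusi\QBFsymbol\QBFsymbolb\alpha$, $\sameminusi\QBFsymbol\QBFsymbolb\alpha$, $\samej\QBFsymbol\QBFsymbolb$, and $\sameprop\QBFsymbol\QBFsymbolb$, which jointly enforce the state-update clauses of Definition~\ref{truthconddynam}; sequential composition uses $\exists \setQBFvariableslevel\QBFfreshsymbol$ to existentially pick a valuation encoding an intermediate state; non-deterministic choice is handled by disjunction; and $?\varphi$ combines state-equality macros with the inductive hypothesis on $tr_\QBFsymbol(\varphi)$.

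The main obstacle is the correspondence between QBF valuations of $\setQBFvariableslevel\QBFfreshsymbol$ and states in $\setbelbasecustom$, since the QBF quantifier a priori ranges over arbitrary Boolean assignments whereas the modal semantics ranges over genuine states. I would need to observe that (i) every valuation of $\setQBFvariableslevel\QBFfreshsymbol$ determines a unique state of $\setbelbasecustom$ (read off $B_i$ from the $\propfortriangle{\expbel i \alpha}\QBFfreshsymbol$ variables for $\alpha \in \setrelevantformulas_i$, and $V$ from the $\propfortriangle p \QBFfreshsymbol$ variables), and (ii) atomic propositions not present in $\setrelevantformulas$ or $\supervarphi$ play no role, because they appear neither in formulas that can enter accessibility-relation tests nor anywhere in the translation itself. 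A small companion observation is that the choice of fresh symbol $\QBFfreshsymbol$ at each modal step guarantees the variable sets $\setQBFvariableslevel\QBFsymbol$ and $\setQBFvariableslevel\QBFfreshsymbol$ at different levels are disjoint, so quantifier scoping behaves as intended. With these observations in place, the induction proceeds without further difficulty.
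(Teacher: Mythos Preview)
Your proposal is correct and follows essentially the same approach as the paper's own proof: a structural induction establishing that for any state $S$ encoded by the variables $\setQBFvariableslevel\QBFsymbol$, the QBF $tr_\QBFsymbol(\varphi)$ is true iff $(S,\setbelbasecustom)\models\varphi$, from which the proposition follows immediately via the $\descr{\st_0}{\setQBFvariableslevel\QBFsymbol}$ conjunct. You are in fact more careful than the paper's appendix sketch, which handles only $p$, $\expbel i\alpha$, and $\impbel i\psi$ explicitly and dismisses the remaining modalities and the entire program case as ``analogous''; your explicit mutual induction on formulas and programs, your verification that the macros $\QBFrel$, $A$, $R$ capture the accessibility relations, and your remarks on the bijection between valuations of $\setQBFvariableslevel\QBFfreshsymbol$ and states of $\setbelbasecustom$ and on symbol freshness fill in details the paper leaves implicit.
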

 
\subsection{Implementation}
\label{subsection-implementation}

\begin{table}
\centering
\begin{tabular}{lrrrrr}
    \hline
    $|\AGT|$ & 1 & 10 & 20 & 40 & 60\\
    $|\atmset|$ & $6$ & $60$ & $120$ & $240$ & $360$\\
    $|\Gamma_i|$ & $4$ & $4$ & $4$ & $4$ & $4$\\
    $|\setbelbasecustom|$ & $2^{18}$ & $2^{180}$ & $2^{360}$ & $2^{720}$ & $2^{1080}$\\
    \hline
    exec.\ time (sec.) & $0.07$ & $0.19$ & $0.57$ & $2.38$ & $5.67$\\
    \hline
\end{tabular}
\caption{\label{tab:exec-times}Model checker performance on Example~\ref{example:mother}.}
\end{table}

In order to verify its feasibility, we implemented a symbolic model checker for the static part of the language in Haskell (see Code URL on page~\pageref{code-url}) which uses the reduction to TQBF.
The resulting QBF-formula is solved using the binary decision diagram (BDD) library HasCacBDD \cite{HasCacBDD}.
It was compiled with GHC 9.4.8 in a MacBook Air with a 1.6 GHz Dual-Core Intel Core i5 processor and 16,537 GB of RAM, running macOS Sonoma 14.4.1.

Table~\ref{tab:exec-times} shows the performance of the implementation on different instances of Example \ref{example:mother}.
The execution times correspond to the elapsed time to model check Equation (\ref{check3}).
The number of states in the model $|\setbelbasecustom|$ gives an idea of the size of the task, and also the reason why a naive implementation would not be feasible.
The number of states corresponds to $2$ to the power of the number of variables in the set $X_s$ defined above.
The latter corresponds to the number of propositional variables plus $3$ times the number of formulas in each agent's belief base, i.e.,
$|\atmset| + 3 \times \big|\bigcup_{i \in \AGT}\Gamma_i\big|$.

\section{Conclusion}
\label{sec:conclusion}

We have presented a modal
logic framework
for 
cognitive attitudes
of agents
whose semantics relies  on 
belief
bases
and in which  models
are
represented succinctly. 
It 
allows us to reason about mental 
attitudes
of both epistemic
and motivational type
 including complete attraction and repulsion, realistic attraction
 and repulsion,
 and the derived concepts of motivation, demotivation,
 indifference, ambivalence and preference. 
We have supported our contribution 
with a number of results
about axiomatics 
and expressiveness, and with a 
theoretical
and experimental analysis of model checking
based on a reduction into TQBF. 

We  did not assume an agent's belief base to be  globally
consistent due to the presence of belief expansion operations in the dynamic setting. 
 Indeed, a belief expansion operation 
 could make an agent's belief base inconsistent.  For example, 
if an agent explicitly believes that $\alpha$
and that $\alpha \rightarrow \beta $,
after expanding its  belief base with $\neg \beta $,
its  belief base will become
inconsistent.
To  be able to handle belief dynamics while preserving the requirement that belief bases are globally  consistent, we 
should replace belief  expansion with  revision. However, this replacement  would make the overall semantics more complex and convoluted.  
Indeed,
as shown in \cite{DBLP:conf/ijcai/LoriniS21}, 
in order 
to model 
belief revision
in the belief base semantics 
we must compute  the intersection
of all  maximal consistent
subsets
of the belief base 
which contain the input formula. 
The interested reader can find 
the technical details of 
the extension  
with belief revision 
 as well as the reduction 
of its model checking problem 
into TQBF in
\OnlyAaai{the extended version of the paper (see Extended version URL on page~\pageref{extended-url}).}%
\OnlyArxiv{the appendix.}
Interestingly, the PSPACE model checking
procedure   extends
to revision in a natural way. 

The attraction and repulsion modalities
$\attract{i}{} $
and $\repuls {i}{} $
are symmetric with respect to the ``good''/``bad'' dimension.
A further interesting notion we plan
to consider in future work  is `strong' motivation.
Being `strongly' motivated by $\varphi$ is captured by the condition that every state satisfying $\varphi$ is both attractive and not repulsive.
In other words, for an agent to be strongly motivated by $\varphi$, it must be the case that the presence of $\varphi$ makes a situation attractive and prevents a situation from being repulsive.
Note that this notion of `strong' motivation is not definable from the other modalities.

The notion of preference
we have considered in the paper  is 
defined from purely qualitative notions of ``being motivated'' and ``being demotivated''. 
In  future work we also  plan to 
consider graded notions of being motivated/demotivated (i.e, the agent is motivated/demotivated by $ \varphi$ with a certain strength $ k\in \mathbb{N}$). This would allow us to refine the notion of preference, as follows: an agent prefers $ \varphi$ to $ \psi$ iff it is motivated by $ \varphi$ more than it is motivated by $ \psi$, or it is demotivated by $ \psi$ more than it is demotivated by $ \varphi$.

\section*{Acknowledgments}
Emiliano Lorini and François Schwarzentruber are supported by the ANR EpiRL project ANR-22-CE23-0029.
Elise Perrotin has  benefited from the support of the JSPS KAKENHI Grant Number JP21H04905.
Tiago de Lima is supported by the ANR AI Chair project ``Responsible AI'' (grant number ANR-19-CHIA-0008).

\bibliography{biblio}


\OnlyArxiv{\clearpage 

\appendix

\section{Appendix I: Proofs}

This  appendix contains the detailed proofs of
Theorems \ref{theo:unexpr}
and \ref{theo:2}
as well as the proofs of 
Proposition  \ref{prop:traduction}
about the translation into QBF,
and of 
all the other  propositions
given in the paper. 

\subsection{Proof of Theorem \ref{theo:unexpr}}\label{sec:ann1}

We give the proof of Theorem \ref{theo:unexpr} for the full language containing dynamic operators. 
We first show the result for $\attract{i}{} $ and $\repuls{i}{} $.

 \begin{proposition}\label{theo:unexpr1}
 The operators
     $\attract{i}{} $
     and
     $\repuls{i}{} $
     are not expressible
     with the other modalities.
 \end{proposition}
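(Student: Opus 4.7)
The plan is to prove non-expressibility by the standard ``two indistinguishable models'' technique: for each of the two operators I exhibit a pair of pointed $\Gamma$-models that agree on every formula of $\langdyn$ not using the operator in question, but are separated by a carefully chosen formula that does use it. Agreement on the surviving sublanguage is then verified by a routine induction on formula structure.

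The key semantic observation to exploit is the asymmetry between \emph{window} modalities and \emph{box} modalities. The operators $\impbel{i}$, $\attractreal{i}{}$ and $\repulsreal{i}{}$ are all box-style: they quantify over accessibility relations starting from the actual state, and so they only ``see'' states that are epistemically reachable from it. The target operator $\attract{i}{}$, by contrast, is a window modality that quantifies universally over \emph{every} state in the context that satisfies $\varphi$, regardless of whether it is epistemically accessible. My plan is to hide a perturbation of the context behind the epistemic relation so that the box modalities cannot detect it, while arranging for that perturbation to flip the truth of some $\attract{i}{}\varphi$.

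Concretely, I would fix a single-agent signature with a handful of atoms and construct two $\Gamma$-models $(S, \Gamma^{(1)})$ and $(S, \Gamma^{(2)})$ sharing the same actual state $S$. The context $\Gamma^{(2)}$ differs from $\Gamma^{(1)}$ by the presence of one auxiliary state $T$ engineered so that (i) $T$ is not epistemically accessible from $S$ (so $\impbel{i}$, $\attractreal{i}{}$ and $\repulsreal{i}{}$ ignore it), (ii) $T$ satisfies some formula $\varphi_0$, and (iii) $T$ is not $\relattr{i}$-accessible from $S$, while (iv) the aversive desire structure is arranged so that $\repuls{i}{}$-formulas are unaffected by adding $T$. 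Then $(S,\Gamma^{(1)}) \models \attract{i}{}\varphi_0$ whereas $(S,\Gamma^{(2)}) \not\models \attract{i}{}\varphi_0$, while all sublanguage formulas coincide by induction. The case of $\repuls{i}{}$ is handled symmetrically by swapping the roles of $\rewa{i}$ and $\punis{i}$.

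The main obstacle will be the dynamic fragment. Executing a program $\pi$ rewrites the actual belief base and hence every accessibility relation, so I must verify that the indistinguishability between $\Gamma^{(1)}$ and $\Gamma^{(2)}$ survives arbitrary sequences of $+_i\alpha$, $-_i\alpha$, $;$, $\cup$ and $?\psi$. I expect to carry this out by maintaining a bisimulation-style invariant stating that the auxiliary state $T$ remains both epistemically invisible from the updated actual state and inert with respect to the attractive and repulsive relations at every intermediate configuration. Because $+_i\alpha$ and $-_i\alpha$ touch only agent $i$'s belief base at the current state and do not modify $T$, this invariant is preserved step by step, which reduces the dynamic induction to the static one already handled.
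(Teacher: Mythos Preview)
Your plan is essentially the paper's: two small single-agent models sharing the actual state $S$, a perturbation hidden behind $\relstate{i}$, and a simultaneous structural induction over the $\attract{i}{}$-free fragment. The paper, however, does not \emph{add} an auxiliary state but \emph{swaps} one: it takes $\iconstraint_1=\{S,S_1\}$ and $\iconstraint_2=\{S,S_2\}$ where $S_1$ and $S_2$ differ only in the valuation of a fresh atom $q$ (with $q\to\rewa{1}\in B_1(S)$), so that $S\relattr{1}S_1$ but $S\not\relattr{1}S_2$. This gives a bijection between the two contexts and lets the induction run as two coupled clauses, $(S,\iconstraint_1)\equiv(S,\iconstraint_2)$ and $(S_1,\iconstraint_1)\equiv(S_2,\iconstraint_2)$, which is tidier than tracking an extra, unmatched state.

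The point you should tighten is condition~(iv). The operator $\repuls{i}{}$ is itself a window modality: $(S,\iconstraint)\models\repuls{i}{}\psi$ quantifies over \emph{every} $\psi$-state in the context, not just the epistemically reachable ones. So merely hiding $T$ from $\relstate{i}$ does nothing for $\repuls{i}{}$; adding a state $T$ with $T\models\varphi_0$ and $S\not\relrepu{i}T$ would flip $\repuls{i}{}\varphi_0$ exactly as it flips $\attract{i}{}\varphi_0$. To make (iv) go through you need $T$ to be sublanguage-indistinguishable from a state already present in $\Gamma^{(1)}$ that has the same $\relrepu{i}$-status from $S$ (this is precisely what the paper's swap achieves), or else force $\relrepu{i}(S)$ to be all of the context. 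You acknowledge (iv) as a requirement but do not say how to meet it; that is where the actual work lives.

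On the dynamic fragment you are over-engineering. In the paper's contexts the valuations of $S$ and $S_1$ (resp.\ $S_2$) differ, so no $+_i\alpha$ or $-_i\alpha$ can move between them; almost every atomic program falls outside the context and $[\pi]\varphi$ becomes vacuous on both sides simultaneously. A one-line observation to that effect replaces the bisimulation invariant you sketch.
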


\begin{proof}
    We can w.l.o.g. place ourselves in the single agent case. Suppose that there exists a formula $\phi$ in which there are no $\attract i {}$ operators such that $\models\attract 1{} p\lequiv \phi$ for some $p\in\atmset$. Let $\atmset_\phi$ be the set of propositional variables occurring in $\phi$, and consider some $q,r\notin\atmset_\phi$. We now define $\iconstraint_1=\{S,S_1\}$ and $\iconstraint_2=\{S,S_2\}$ where:
    \begin{align*}
        S&=(\{q\limply\rewa 1,\lnot r\},\emptyset);\\
        S_1&=\{\{r\},\{p,q,r\}\};\\
        S_2&=\{\{r\},\{p,r\}\}.
    \end{align*}
    Observe that $S\relstate 1 S$, $S_1\relstate 1 S_1$ and $S_2\relstate 1 S_2$, but $S\cancel{\relstate 1} S_1,S_2$ and $S_1,S_2\cancel{\relstate 1} S$. Moreover, $S \relattr 1 S_1$ but $S \cancel{\relattr 1} S_2$.
    It follows that $S,\iconstraint_1\models\attract 1{}p$ but $S,\iconstraint_2\not\models\attract 1{}p$. However, $(S,\iconstraint_1)$ and $(S,\iconstraint_2)$ must agree on $\phi$.
    This is shown by proving by induction on the structure of $\psi$ that for any subformula $\psi$ of $\phi$, $(S,\iconstraint_1)\models\psi$ iff $(S,\iconstraint_2)\models\psi$, and $(S_1,\iconstraint_1)\models\psi$ iff $(S_2,\iconstraint_2)\models\psi$.
   
    The proof for $\repuls i{}$ is the same, replacing $\rewa 1$ with $\punis 1$.
\end{proof}

We now move on to $\attractreal{i}{}  $ and $ \repulsreal{i}{} $.
  \begin{proposition}
     The operators
     $\attractreal{i}{}  $
     and 
     $ \repulsreal{i}{} $
     are not expressible
    with the other modalities.
 \end{proposition}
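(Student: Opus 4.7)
The plan mirrors the strategy of Proposition \ref{theo:unexpr1}. Placing ourselves in the single-agent case, we will suppose toward a contradiction that there exists a formula $\phi$, in which $\attractreal{1}{}$ does not occur, such that $\models \attractreal{1}{} p \lequiv \phi$. Let $\atmset_\phi$ collect the propositional variables occurring in $\phi$ (including those inside $\expbel{1}$-subformulas) and pick two fresh atoms $q, s \notin \atmset_\phi$. The goal is then to build two contexts disagreeing on $\attractreal{1}{} p$ at a common actual state $S$ while agreeing on every formula of the restricted language (i.e., with no occurrence of $\attractreal{j}{}$).

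The construction will use $S = (\{q \rightarrow \rewa{1}, s\}, \emptyset)$, $T_1 = (\emptyset, \{p, q, s, \rewa{1}\})$, $T_2 = (\emptyset, \{p, s, \rewa{1}\})$ and a ``shadow'' state $T_y = (\emptyset, \{p, \rewa{1}\})$, with $\iconstraint_1 = \{S, T_1, T_y\}$ and $\iconstraint_2 = \{S, T_2, T_y\}$. A direct inspection shows $S \relstate{1} T_1$ and $S \relstate{1} T_2$, but neither $S \relstate{1} T_y$ (because $T_y \not\models s$) nor $S \relstate{1} S$; that $T_1$ is the unique state attracting $S$; and that $S$'s aversive desire base is empty. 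Consequently $(S, \iconstraint_1) \models \attractreal{1}{} p$ (the only epistemically accessible $p$-state, $T_1$, is attractive) whereas $(S, \iconstraint_2) \not\models \attractreal{1}{} p$ ($T_2$ is an epistemically accessible $p$-state that is not attractive).

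The bulk of the proof will be a simultaneous induction on $\psi$ in the restricted language, establishing two clauses: under the correspondence $S \leftrightarrow S$, $T_1 \leftrightarrow T_2$, $T_y \leftrightarrow T_y$ paired models agree on $\psi$; and within each context the ``twin'' states $T_1, T_y$ (respectively $T_2, T_y$) agree on $\psi$. The atomic and $\expbel{1}$-cases are immediate, because $T_1, T_2, T_y$ have identical intersections with $\atmset_\phi$ (differing only on the fresh atoms $q, s$) and paired belief bases coincide. For $\impbel{1}$ the epistemic neighbourhood of $S$ is $\{T_1\}$ in $\iconstraint_1$ and $\{T_2\}$ in $\iconstraint_2$, matched by the correspondence; from every other state the empty belief base makes the neighbourhood equal to the full context, and the correspondence maps $\iconstraint_1$ bijectively onto $\iconstraint_2$. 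The $\repuls{1}{}$ and $\repulsreal{1}{}$ cases reduce to the (non-)existence of (epistemically accessible) $\psi$-states, since all relevant aversive desire bases are empty, and agreement then follows from the induction hypothesis.

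The delicate step, and the main obstacle, is $\attract{1}{}\psi$ at $S$: the set of states attractive from $S$ is $\{T_1\}$ in $\iconstraint_1$ but empty in $\iconstraint_2$, so a naive bisimulation clause fails and a formula such as $\attract{1}{}(p \wedge \rewa{1})$ would distinguish the contexts if $T_y$ were absent. The reason for introducing $T_y$ is precisely to neutralise this: since $T_1$ and $T_y$ coincide on $\atmset_\phi$ and both carry empty belief bases, the within-context clause of the induction hypothesis yields that they satisfy exactly the same formulas of the restricted language inside $\iconstraint_1$; hence the extension $\{T \in \iconstraint_1 : T \models \psi\}$ contains $T_1$ iff it contains $T_y$, and is therefore included in $\{T_1\}$ iff it is empty. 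The same collapse happens in $\iconstraint_2$ (with $T_y$ now twinned with $T_2$), so $\attract{1}{}\psi$ reduces on both sides to ``no $\psi$-state exists in the context'' and agrees by the induction hypothesis. Applying the induction to $\phi$ then yields $(S, \iconstraint_1) \models \phi$ iff $(S, \iconstraint_2) \models \phi$, contradicting $\phi \lequiv \attractreal{1}{} p$. The proof of unexpressibility of $\repulsreal{i}{}$ will be entirely symmetric: replace $\rewa{1}$ by $\punis{1}$ in $S$'s belief base so that $q$ becomes an aversive rather than an appetitive desire, keep $T_y$ as a non-repulsive twin, and exchange the roles of the appetitive and aversive modalities throughout.
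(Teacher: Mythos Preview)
Your proof is correct and rests on the same core idea as the paper: pick fresh atoms, build two small contexts that disagree on $\attractreal{1}{}p$ at a shared state $S$, and run a structural induction in which a ``twin'' $p$-state that is neither epistemically accessible nor attractive from $S$ collapses the $\attract{1}{}$ case to a mere non-emptiness test. The concrete models differ slightly: the paper takes $\iconstraint_1\subset\iconstraint_2$ (adding one extra state $S_3$ in the larger context) and uses three fresh atoms, whereas you keep the two contexts the same size and related by a bijection ($T_1$ in $\iconstraint_1$ swapped for $T_2$ in $\iconstraint_2$, with the shared twin $T_y$) using only two fresh atoms. Your symmetric setup requires the explicit simultaneous induction with the extra within-context clause $T_1\equiv T_y$ (resp.\ $T_2\equiv T_y$), which you identify and handle correctly; the paper's nested-context construction achieves the same collapse via the analogous observation that $S_1,S_2,S_3$ are mutually indistinguishable. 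Both routes work; yours is marginally more economical in fresh atoms and more explicit about the delicate $\attract{1}{}$ step, while the paper's is terser.
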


\begin{proof}
    The proof is similar to that of Proposition \ref{theo:unexpr1}. We once again place ourselves in the single-agent case, and suppose that there exists a formula $\phi$ in which there are no $\attractreal i{}$ operators such that $\models\attractreal 1{} p\lequiv\phi$. We consider some $q,r,s\in\atmset\setminus\atmset_\phi$, and define $\iconstraint_1=\{S,S_1,S_2\}$ and $\iconstraint_2=\{S,S_1,S_2,S_3\}$ where:
    \begin{align*}
        S&=(\{q,r\limply\rewa 1\},\emptyset);\\
        S_1&=\{\{s\},\{p,q,r,\rewa 1\}\};\\
        S_2&=\{\{s\},\{p,\rewa 1\}\};\\
        S_3&=\{\{s\},\{p,q,\rewa 1\}\}.
    \end{align*}
    We have that $S\relstate 1 S_1,S_3$ and $S\relattr 1 S_1$; all alternative relations are otherwise empty.
    Moreover, $(S,\iconstraint_1)\models\attractreal 1{} p$ while $(S,\iconstraint_2)\not\models\attractreal 1{} p$. 
    However we can once again show by induction that $(S,\iconstraint_1)$ and $(S,\iconstraint_2)$ agree on all subformulas of $\phi$ and hence on $\phi$ itself.

    The proof for $\repulsreal i{}$ is the same, replacing the $\rewa 1$ with $\punis 1$.
\end{proof}

\subsection{Proof of Theorem \ref{theo:2} }
\label{sec:ann2}

We now give the proof of completeness of our axiomatization. We use similar techniques to the completeness proof in \cite{LoriniAI2020}.

In order to simplify the proof, and because attraction and repulsion are fully symmetrical, we place ourselves in the repulsion-free fragment, that is, we do not consider the $\repuls i{}$ and $\repulsreal i{}$ operators.
To prove completeness of the axiomatization we must define two classes of models called \textit{notional doxastic models with attraction} (NDMAs) and \textit{quasi-notional doxastic models with attraction} (quasi-NDMAs). A NDMA is a model $M=(W,B,E,A,RA,V)$ where $W$ is a set of worlds, $B:\agtset\times W\to 2^{\fraglang}$ indicates agents' explicit beliefs at each world, $E,A,RA:\agtset\times W\to 2^W$ are alternative relations between worlds for each agent, and $V:\atmset\to 2^W$ is a valuation, and such that under the semantics
\begin{eqnarray*}
    (M,w) \models p & \Longleftrightarrow & w\in V(p),\\
    (M,w) \models \expbel i\alpha & \Longleftrightarrow & \alpha\in B(i,w),\\
    (M,w)\models \impbel{i}\phi & \Longleftrightarrow
        & \forall v \in  W , \text{ if } v\in E(i,w)
            \text{ then }\\
&&             (M,v) \models \phi,\\
             (M,w) \models \attract{i}{} \varphi & \Longleftrightarrow & \forall v \in W , \text{if }
  (M,v) \models \varphi \text{ then } \\
  && v\in A(i,w),\\
   (M,w) \models \attractreal{i}{}\varphi 
 & \Longleftrightarrow & 
 \forall v \in  W  , \text{ if }
  (M,v) \models \varphi
  \text{ then }\\
  &&   v\in RA(i,w),
\end{eqnarray*}
and as usual for boolean operators, the following conditions hold:
\begin{align*}
    E(i,w)&=\bigcap_{\alpha\in B(i,w)}||\alpha||;
   \tag{NDMA$_1$}\\
    A(i,w)&=\bigcup_{\alpha\limply\rewa i\in B(i,w)}||\alpha||;
    \tag{NDMA$_2$}\\
    RA(i,w)&=A(i,w)\cup \overline{E}(i,w);
    \tag{NDMA$_3$}
\end{align*}
for all $i\in\agtset$ and $w\in W$, where $\overline{E}(i,w)=W\setminus E(i,w)$.

A quasi-NDMA is a similar model except that the three conditions are replaced by:
\begin{align*}
    E(i,w)&\subseteq\bigcap_{\alpha\in B(i,w)}||\alpha||;
   \tag{qNDMA$_1$}\\
    A(i,w)&\supseteq\bigcup_{\alpha\limply\rewa i\in B(i,w)}||\alpha||;
    \tag{qNDMA$_2$}\\
    RA(i,w)&\supseteq A(i,w)\cup \overline{E}(i,w);
    \tag{qNDMA$_3$}\\
    RA(i,w)&\subseteq \bigcup_{E(i,w)\subseteq||\phi\limply\rewa i||}||\phi||.
    \tag{qNDMA$_4$}
\end{align*}

Note that the four latter conditions are all implied by the combination of (NDMA$_1$), (NDMA$_2$) and (NDMA$_3$).
To show this for (qNDMA$_4$), take $v\in RA(i,w)$: then by (NDMA$_3$) either $v\in A(i,w)$ or $v\notin E(i,w)$. If the former, by (NDMA$_2$) there exists $\alpha\limply\rewa i\in B(i,w)$ such that $v\in||\alpha||$, and so we take $\phi=\alpha$ in (qNDMA$_4$). If the latter, then by (NDMA$_1$) there exists $\alpha\in B(i,w)$ such that $v\notin||\alpha||$, and we can take $\phi=\lnot\alpha$ in (qNDMA$_4$).

The completeness proof goes as follows. Given a finite set of formulas $\Sigma$ containing all $\rewa i$ and closed under the subformula relation as well as the implications 
$ \expbel{i }\alpha \rightarrow \impbel{i} \alpha $, $ \expbel{i }(\alpha\rightarrow \rewa{i} )\rightarrow \attract{i}{} \alpha$, $\attract{i}{}\varphi \rightarrow   \attractreal {i}{}\varphi$, $\impbel{i}{}\varphi \rightarrow   \attractreal i{} \neg \varphi$, and $\attractreal{i}{}\varphi \rightarrow  \impbel{i} (\varphi \rightarrow \rewa{i})$,
we define a finite quasi-NDMA which we call the $\Sigma$-canonical model such that any formula in $\Sigma$ is satisfiable in the logic $\mathsf{LCA}$ iff it is satisfiable in this model with the semantics described above. We then show how to define from a finite quasi-NDMA an equivalent finite quasi-NDMA satisfying property (NDMA$_3$), and from there how to define an equivalent NDMA, and from that an equivalent belief base context. By equivalent we mean that satisfiability of formulas in $\Sigma$ is preserved through the transformations. 

Given a set of formulas $\Sigma$ with the properties described above, the $\Sigma$-canonical model is defined as follows: $M_\Sigma=(W_\Sigma,B_\Sigma,E_\Sigma,A_\Sigma,RA_\Sigma,V_\Sigma)$ where
\begin{align*}
    W_\Sigma=\{X\cap&\Sigma\suchthat X\text{ is a maximal consistent set in }\mathsf{LCA}\};\\
    B_\Sigma(i,w)&=\{\alpha\suchthat\expbel i\alpha\in w\};\\
    E_\Sigma(i,w)&=\{v\suchthat\forall\impbel i\phi\in w, \phi\in v\};\\
    A_\Sigma(i,w)&=\{v\suchthat\exists\phi\in v,\attract i{}\phi\in w\};\\
    RA_\Sigma(i,w)&=\{v\suchthat\exists\phi\in v,\attractreal i{}\phi\in w\};\\
    V_\Sigma(p)&=\{w\suchthat p\in w\}.
\end{align*}

\begin{lemma}
For any formula $\phi\in\Sigma$ and any $w\in W_\Sigma$, $(M_\Sigma,w)\models \phi$ iff $\phi\in w$. Moreover, $M_\Sigma$ is a quasi-NDMA. 
\end{lemma}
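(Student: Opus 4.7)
My plan is to establish the two claims separately: first, the truth lemma, by induction on the construction of $\phi$; second, the four quasi-NDMA conditions, by direct verification from the corresponding axioms.

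For the truth lemma, the atomic, explicit-belief, and Boolean cases will follow routinely from the definitions of $V_\Sigma$ and $B_\Sigma$ together with the closure of maximal consistent sets under Boolean connectives. For $\phi = \impbel{i}\psi$, I would unpack the definition of $E_\Sigma$ and invoke the induction hypothesis for one direction; for the converse I would use a standard K-style existence argument: if $\impbel{i}\psi \notin w$, the set $\{\neg\psi\} \cup \{\chi \suchthat \impbel{i}\chi \in w\}$ is consistent (otherwise \ref{ax:A1} together with \ref{ax:R1} would force $\impbel{i}\psi$ into $w$), so extending it to a maximal consistent set and intersecting with $\Sigma$ yields the required $v \in E_\Sigma(i,w)$ refuting $\psi$.

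The main obstacle will be the window-style modalities $\attract{i}{}$ and $\attractreal{i}{}$. For $\attract{i}{}\psi \notin w$ I need to produce $v \in W_\Sigma$ with $\psi \in v$ and $\chi \notin v$ for every $\chi \in \Sigma$ with $\attract{i}{}\chi \in w$; equivalently, the set $\{\psi\} \cup \{\neg\chi \suchthat \chi \in \Sigma, \attract{i}{}\chi \in w\}$ must be consistent. Suppose it is not, and enumerate the relevant formulas as $\chi_1, \dots, \chi_n$, so that $\vdash \psi \limply \chi_1 \vee \dots \vee \chi_n$. Two derived principles from \ref{ax:A2}, \ref{ax:R2}, and \ref{ax:R3} will come into play: (a) \emph{anti-monotonicity} of $\attract{i}{}$, namely $\vdash \alpha \limply \beta$ implies $\vdash \attract{i}{}\beta \limply \attract{i}{}\alpha$, obtained by applying \ref{ax:R2} to the tautology $\neg(\alpha \wedge \neg\beta)$ to derive $\vdash \attract{i}{}(\alpha \wedge \neg\beta)$ and then instantiating \ref{ax:A2} with $\varphi = \beta$ and $\psi = \alpha$; and (b) a \emph{combination principle} $\vdash \attract{i}{}\chi_1 \wedge \dots \wedge \attract{i}{}\chi_n \limply \attract{i}{}(\chi_1 \vee \dots \vee \chi_n)$, proved by induction on $n$, using (a) to push the tail disjunction inside a $\neg\chi_1$ conjunct before applying \ref{ax:A2}. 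Chaining these with anti-monotonicity applied to $\psi \limply \chi_1 \vee \dots \vee \chi_n$ would force $\attract{i}{}\psi \in w$, contradicting the hypothesis. The case of $\attractreal{i}{}$ is entirely analogous.

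Finally, the quasi-NDMA conditions will follow directly from the truth lemma and the corresponding axioms: \ref{ax:A3} yields (qNDMA$_1$); \ref{ax:A4} yields (qNDMA$_2$); \ref{ax:A6} and \ref{ax:A8} jointly yield (qNDMA$_3$), the former transferring membership from $A_\Sigma(i,w)$ to $RA_\Sigma(i,w)$, the latter witnessing membership in $RA_\Sigma(i,w)$ for $v \notin E_\Sigma(i,w)$ via the formula $\neg\phi$ where $\impbel{i}\phi \in w$ and $\phi \notin v$; and \ref{ax:A10} yields (qNDMA$_4$) by converting $\attractreal{i}{}\phi \in w$ into $\impbel{i}(\phi \limply \rewa{i}) \in w$, so that $v$ lies in the union indexed by formulas whose implication with $\rewa{i}$ is implicitly believed. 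The closure properties imposed on $\Sigma$ — subformula closure, and closure under the listed axiom instances — are precisely what guarantees that each auxiliary formula ($\attractreal{i}{}\phi$, $\attractreal{i}{}\neg\phi$, $\phi \limply \rewa{i}$, and so on) actually lies in $\Sigma$ and can serve as the required witness in members of $W_\Sigma$.
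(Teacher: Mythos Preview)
Your proposal is correct and follows essentially the same route as the paper's proof: induction on $\phi$ for the truth lemma, a standard K-style existence argument for $\impbel{i}$, a consistency argument for the window modalities hinging on the disjunction principle $\attract{i}{}\chi_1\wedge\cdots\wedge\attract{i}{}\chi_n\limply\attract{i}{}(\chi_1\vee\cdots\vee\chi_n)$ derived from \ref{ax:A2}, \ref{ax:R2} and \ref{ax:R3}, and direct verification of the quasi-NDMA conditions from axioms \ref{ax:A3}, \ref{ax:A4}, \ref{ax:A6}, \ref{ax:A8}, \ref{ax:A10}. The only cosmetic difference is that you factor out anti-monotonicity as a named lemma and then compose it with the combination principle, whereas the paper applies \ref{ax:R2} directly to the single formula $\phi\wedge\neg\bigvee_{\attract{i}{}\psi\in w}\psi$ before invoking \ref{ax:A2}; the underlying derivations are the same.
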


\begin{proof}
    We show the first result by induction on the structure of $\phi$. The cases of atoms of $\fraglang$ and boolean operators are straightforward.

    If $\impbel i\phi\in w$ then it follows immediately from the definition of $E(i,w)$ that $(M,w)\models\impbel i\phi$. If $(M,w)\models\impbel i\phi$, where $w=X\cap\Sigma$ for some MCS $X$, let $A=\{\psi\suchthat\impbel i\psi\in w\}\cup\{\lnot\phi\}$. If $A$ is consistent then there exists a maximal consistent set $Y$ such that $A\subseteq Y$. Then by definition of $E(i,w)$ we have that $Y\cap\Sigma\in E(i,w)$, but by the induction hypothesis $(M,Y\cap\Sigma)\not\models \phi$. Hence $A$ is inconsistent, that is, $\vdash \bigwedge_{\psi\suchthat\impbel i\psi\in w}\psi\limply\phi$. As $X$ is an MCS we conclude that $\impbel i\phi\in X$, hence $\impbel i\phi\in w$.

    If $\attract i{}\phi\in w$, suppose that $M,v\models\phi$ for some $v\in W$. Then by the induction hypothesis $\phi\in v$, hence by the definition of $A_\Sigma(i,w)$ we have $v\in A_\Sigma(i,w)$. Therefore $M,w\models \attract i{}\phi$.

    If $(M,w)\models\attract i{}\phi$, where $w=X\cap\Sigma$ for some MCS $X$, consider the formula $\Phi=\phi\land\lnot\bigvee_{\attract i{}\psi\in w}\psi$. If it is consistent then there exists a MCS $Y$ such that $\Phi\in Y$, hence $\phi\in Y\cap\Sigma$ and $\psi\notin Y\cap\Sigma$ for all $\psi$ such that $\attract i{}\psi\in w$. By the induction hypothesis, we get that $(M,Y\cap\Sigma)\models \phi$, hence as $(M,w)\models\attract i{}\phi$ we obtain that $Y\cap\Sigma\in A(i,w)$. This contradicts the definition of $A(i,w)$. Therefore $\Phi$ is inconsistent, that is, $\vdash\lnot(\phi\land\lnot\bigvee_{\attract i{}\psi\in w}\psi)$. From the rule of necessitation for $\attract i{}$ and uniform substitution we get that $\vdash\attract i {}(\phi\land\lnot\bigvee_{\attract i{}\psi\in w}\psi)$. Moreover it can be shown that $\vdash \attract i {}\psi \land \attract i {}\psi'\limply \attract i{}(\psi\lor\psi')$.%
    \footnote{Here is a derivation:
    \begin{align*}
        &\vdash \lnot (\lnot \psi \land (\lnot\phi\land\psi))\\
        &\vdash \attract i{}(\lnot \psi \land (\lnot\phi\land\psi)) 
        \\
        &\vdash \attract i{}\psi \land \attract i{}(\lnot \psi \land (\lnot\phi\land\psi)) \limply \attract i{} (\lnot\phi\land\psi) 
        \\
        &\vdash \attract i{}\psi\limply \attract i{} (\lnot\phi\land(\phi\lor\psi)) 
        \\
        &\vdash \attract i{}\phi\land \attract i{} (\lnot\phi\land(\phi\lor\psi))\limply \attract i{} (\phi\lor\psi)
        \\
        &\vdash \attract i{}\phi\land\attract i{}\psi\limply \attract i{} (\phi\lor\psi) 
    \end{align*}
    }
    
    Generalizing this we obtain that $\attract i {}(\bigvee_{\attract i{}\psi\in w}\psi)\in X$. From the axiom $\big( \attract i{}    \varphi \wedge \attract i{} (\neg \varphi \wedge \psi)  \big) \rightarrow \attract i{}  \psi$ we finally obtain that $\attract i{}\phi\in X$, hence $\attract i{}\phi\in w$.

    The case of $\attractreal i{}\phi$ is proved in the same way as that of $\attract i{}\phi$.
    
    We now show that $M_\Sigma$ is a quasi-NDMA. Property (qNDMA$_1$) follows from the axiom $ \expbel{i }\alpha \rightarrow \impbel{i} \alpha $. Property (qNDMA$_2$) follows from the axiom $ \expbel{i }(\alpha\rightarrow \rewa{i} )\rightarrow \attract{i}{} \alpha$. Property (qNDMA$_3$) follows from the axioms $\attract{i}{}\varphi \rightarrow   \attractreal {i}{}\varphi$ and $\impbel{i}{}\varphi \rightarrow   \attractreal i{} \neg \varphi$. Property (qNDMA$_4$) follows from the axiom $\attractreal{i}{}\varphi \rightarrow  \impbel{i} (\varphi \rightarrow \rewa{i})$.
\end{proof}

Given a finite quasi-NDMA $M=(W,B,E,A,RA,V)$, we now define the following expansion of $M$: $M'=(W,B',E',A',RA',V')$ where

\begin{align*}
    W'=\ &\{w_1,w_2\suchthat w\in W\};\\
    B'(i,w_k)=\ &B(i,w);\\
    E'(i,w_k)=\ &\{v_1\suchthat v\in E(i,w)\}\cup
    \\&
    \{v_2\suchthat v\in E(i,w)\setminus (RA(i,w)\setminus A(i,w))\};\\
    A'(i,w_k)=\ &\{v_1,v_2\suchthat v\in A(i,w)\}\cup
    \\& \{v_1\suchthat v\in RA(i,w)\cap(E(i,w)\setminus A(i,w))\};\\
    RA'(i,w_k)=\ &\{v_1,v_2\suchthat v\in RA(i,w)\};\\
    V'(p)=\ &\{w_1,w_2\suchthat w\in V(p)\}
\end{align*}
for $k\in\{1,2\}$. That is, we create two copies of each world in $W$ and maintain the properties of all worlds except that when 
$v\in RA(i,w)\cap(E(i,w)\setminus A(i,w))$ we now have $v^1\in E'(i,w_k)\cap A'(i,w_k)$ and $v^2\notin E'(i,w_k)\cup A'(i,w_k)$ for $k\in\{1,2\}$.

\begin{lemma}
    For any formula $\phi$ and any $w\in W$, $(M,w)\models \phi$ iff $(M',w_1)\models\phi$ iff $(M',w_2)\models\phi$. Moreover $M'$ is a quasi-NDMA satisfying (NDMA$_3$).
\end{lemma}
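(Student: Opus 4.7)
The plan is to establish both claims simultaneously by structural induction on $\phi$, strengthened to the invariant that for every $v \in W$ and every $k \in \{1,2\}$, $(M',v_k) \models \phi$ iff $(M,v) \models \phi$; the lemma's equivalences then appear as the case $v=w$. I would dispatch the atomic cases and Booleans immediately from the equalities $V'(p) = \{v_1,v_2 \suchthat v \in V(p)\}$ and $B'(i,v_k) = B(i,v)$, noting in particular that $\expbel{i}\alpha$ depends only on $B'$.

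For the modal cases involving $\impbel{i}$ and $\attractreal{i}{}$ I expect a uniform argument. Every element of $E'(i,w_k)$ has the form $v_j$ with $v \in E(i,w)$, and conversely every $v \in E(i,w)$ has its first copy $v_1$ in $E'(i,w_k)$; combined with the induction hypothesis this yields the $\impbel{i}$ equivalence in both directions. The $\attractreal{i}{}$ case is even cleaner, since by construction $v_j \in RA'(i,w_k)$ iff $v \in RA(i,w)$ for both $j \in \{1,2\}$, so the IH transports truth in either direction with no further work.

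The hard part will be $\attract{i}{}\psi$. The forward direction is routine, since whenever $v \in A(i,w)$ both copies $v_1,v_2$ are placed in $A'(i,w_k)$. For the converse I would assume $(M',w_k) \models \attract{i}{}\psi$ and pick any $v \in W$ with $(M,v) \models \psi$. By IH, $(M',v_2) \models \psi$, so $v_2 \in A'(i,w_k)$. The crucial observation is that the only second copies $v_2$ that belong to $A'(i,w_k)$ are those with $v \in A(i,w)$, because the anomalous additions for $v \in RA(i,w) \cap (E(i,w) \setminus A(i,w))$ deliberately touch only the first copy. Hence $v \in A(i,w)$. This asymmetric doubling, with $v_2$ serving as a clean witness detectable only when $v$ is genuinely in $A(i,w)$, is where the real content of the lemma lies and is the reason for the particular shape of the construction.

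For the second claim I would check the quasi-NDMA conditions in turn. Conditions (qNDMA$_1$) and (qNDMA$_2$) transfer from $M$ because truth of $\fraglang$-formulas depends only on $V$ and $B$, both mirrored in $M'$; (qNDMA$_4$) follows by lifting a witness formula supplied by (qNDMA$_4$) in $M$ through the truth-preservation lemma just proved. The strengthened (NDMA$_3$) I would verify by case analysis on $v \in RA(i,w)$: if $v \in A(i,w)$ both copies lie in $A'(i,w_k)$; if $v \in E(i,w) \setminus A(i,w)$ then by construction $v_1 \in A'(i,w_k)$ while $v_2 \notin E'(i,w_k)$; and if $v \notin E(i,w)$ neither copy lies in $E'(i,w_k)$. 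The reverse inclusion $A'(i,w_k) \cup \overline{E'}(i,w_k) \subseteq RA'(i,w_k)$ follows directly from (qNDMA$_3$) for $M$ together with the definitions of $A'$ and $E'$.
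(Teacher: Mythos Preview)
Your proposal is correct and follows essentially the same approach as the paper: a structural induction using $v_1$ as the witness for the $\impbel{i}$ direction and $v_2$ as the clean witness for the converse of $\attract{i}{}$, with $\attractreal{i}{}$ handled directly via the definition of $RA'$. The paper simply declares the quasi-NDMA and (NDMA$_3$) verifications ``straightforward,'' so your more explicit case analysis there is additional detail rather than a different method.
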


\begin{proof}
    We show the first result by induction on the structure of $\phi$. The cases of $\fraglang$ atoms and boolean operators as well as $\attractreal i{}\phi$ are straightforward.

    If $(M,w)\models \impbel i\phi$ and $v_l\in E'(i,w_k)$ for some $k,l\in\{1,2\}$ then by definition of $E'$ we have $v\in E(i,w)$, hence $(M,v)\models \phi$ and by the induction hypothesis $(M',v_l)\models\phi$. Therefore $(M',w_k)\models\impbel i\phi$ for $k\in\{1,2\}$.

    If $(M,w_k)\models\impbel i\phi$ for some $k\in\{1,2\}$ and $v\in E(i,w)$ then $v_1\in E'(i,w_k)$, hence $(M',v_1)\models\phi$ and by the induction hypothesis $(M,v)\models\phi$. Therefore $(M,w)\models\impbel i\phi$.

    If $(M,w)\models \attract i{}\phi$ and $(M',v_l)\models\phi$ for some $l\in\{1,2\}$ then by the induction hypothesis $(M,v)\models\phi$, hence $v\in A(i,w)$. Then by the definition of $A'$ we have that $v_l\in A'(i,w_k)$ for all $k\in\{1,2\}$. Hence $(M',w_k)\models \attract i{}\phi$ for $k\in\{1,2\}$.

    If $(M,w_k)\models\attract i{}\phi$ for some $k\in\{1,2\}$ and $(M,v)\models\phi$ then by the induction hypothesis $(M,v_2)\models\phi$ and so $v_2\in A'(i,w_k)$. By definition of $A'$ this means that $v\in A(i,w)$. Therefore $(M,w)\models\attract i{}\phi$.

    Verifying the quasi-NDMA properties and (NDMA$_3$) is straightforward.
\end{proof}

We now wish to recover the properties (NDMA$_1$) and (NDMA$_2$). For this, given a finite quasi-NDMA $M=(W,B,E,A,RA,V)$ satisfying (NDMA$_3$) and such that $B(i,w)\subseteq\Sigma$ for all $i$ and $w$, we define the model $M'=(W,B',E,A,RA,V')$ as follows:
\begin{align*}
    B'(i,w)&=B(i,w)\cup \{p_{i,w},q_{i,w}\limply\rewa i\};\\
    V'(p)&=\begin{cases}
        V(p) &\text{ if }p\in\Sigma;\\
        E(i,w) &\text{ if }p=p_{i,w};\\
        A(i,w) &\text{ if }p=q_{i,w};\\
        \emptyset &\text{ otherwise;}
    \end{cases}
\end{align*}
where all $p_{i,w}$ and $q_{i,w}$ are fresh.

\begin{lemma}
    For any $\phi\in\Sigma$ and $w\in W$, $M,w\models\phi$ iff $M',w\models\phi$. Moreover $M'$ is a NDMA.
\end{lemma}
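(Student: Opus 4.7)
The plan is to prove the two claims in sequence: first the truth-preservation for formulas in $\Sigma$, then the three NDMA conditions for $M'$.

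For truth preservation, I would proceed by induction on the structure of $\phi\in\Sigma$. The atomic case $p\in\atmset$ is immediate since $V'(p)=V(p)$ for $p\in\Sigma$, and the freshly introduced propositions $p_{i,w}, q_{i,w}$ lie outside $\Sigma$ by construction. For $\expbel{i}\alpha$ with $\alpha\in\Sigma$, the key observation is that although $B'(i,w)\supsetneq B(i,w)$, the added formulas $p_{i,w}$ and $q_{i,w}\limply\rewa{i}$ involve fresh variables and hence cannot equal $\alpha\in\Sigma$; therefore $\alpha\in B'(i,w)$ iff $\alpha\in B(i,w)$. The Boolean cases are routine. For the modal cases $\impbel{i}\psi$, $\attract{i}{}\psi$, $\attractreal{i}{}\psi$, the relations $E$, $A$, $RA$ are identical in $M$ and $M'$, so the equivalence reduces to the induction hypothesis on $\psi$ (we quantify over the same universe $W$ in both models).

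Checking the NDMA properties, (NDMA$_3$) is inherited from $M$ verbatim since $E$, $A$, $RA$ are unchanged. For (NDMA$_2$), the inclusion $\bigcup_{\alpha\limply\rewa{i}\in B'(i,w)}||\alpha||_{M'}\supseteq A(i,w)$ is witnessed by $q_{i,w}\limply\rewa{i}\in B'(i,w)$ together with $V'(q_{i,w})=A(i,w)$; the reverse inclusion reduces, on the original formulas $\alpha\in\Sigma$ from $B(i,w)$, to property (qNDMA$_2$) after using truth preservation to equate $||\alpha||_{M'}$ with $||\alpha||_M$, while on the new formula $q_{i,w}\limply\rewa{i}$ it amounts to the trivial fact that $||q_{i,w}||_{M'}=A(i,w)$.

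The main obstacle is (NDMA$_1$), which requires $E(i,w)=\bigcap_{\alpha\in B'(i,w)}||\alpha||_{M'}$. The inclusion $\supseteq$ is trivial since $p_{i,w}\in B'(i,w)$ forces any element of the intersection to lie in $V'(p_{i,w})=E(i,w)$. The delicate direction is $\subseteq$: given $v\in E(i,w)$, one must verify that $v$ satisfies \emph{every} element of $B'(i,w)$ in $M'$, and in particular the newly added implication $q_{i,w}\limply\rewa{i}$. This boils down to showing that whenever $v\in E(i,w)\cap A(i,w)$ one has $v\in V(\rewa{i})$, which is exactly where property (qNDMA$_4$) becomes essential. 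By (qNDMA$_3$) we have $A(i,w)\subseteq RA(i,w)$, so $v\in RA(i,w)$; by (qNDMA$_4$) there exists some $\phi$ with $E(i,w)\subseteq||\phi\limply\rewa{i}||$ and $v\in||\phi||$, and since $v\in E(i,w)$ we conclude $v\in V(\rewa{i})$. Satisfaction of the remaining original formulas $\alpha\in B(i,w)$ at $v$ follows from (qNDMA$_1$) combined with truth preservation on $\Sigma$.
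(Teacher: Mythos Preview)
Your proof is correct and follows essentially the same approach as the paper: induction on $\phi$ exploiting that $E,A,RA$ are unchanged and the fresh variables lie outside $\Sigma$, then verifying (NDMA$_1$) and (NDMA$_2$) via the same case analysis, with the crucial use of (qNDMA$_3$) and (qNDMA$_4$) to show that $v\in E(i,w)\cap A(i,w)$ forces $v\in V(\rewa{i})$. The only trivial point you leave implicit is that $v\in E(i,w)$ satisfies the fresh atom $p_{i,w}$ in $M'$, which is immediate from $V'(p_{i,w})=E(i,w)$.
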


\begin{proof}
    The first result is shown by a simple induction on the structure of $\phi$; intuitively, as the alternative relations are unchanged and the newly introduced variables are fresh, satisfaction of formulas in $\Sigma$ is unaffected. It remains to show that $M'$ is a NDMA. Note that, once again as the alternative relations are unchanged, property (NDMA$_3$) is trivially preserved from $M$ to $M'$.

    Let $v\in E(i,w)$ and $\alpha\in B'(i,w)$. If $\alpha\in B(i,w)$ then $M,v\models\alpha$ by (qNDMA$_1$), hence $M',v\models\alpha$. If $\alpha=p_{i,w}$ then $M',v\models\alpha$ by definition of $V'$. Finally, if $\alpha=q_{i,w}\limply\rewa i$, there are two cases. If $M',v\not\models q_{i,w}$ then $M',v\models\alpha$. Otherwise by definition of $V'$ we have that $v\in A(i,w)$, hence by (qNDMA$_3$) $v\in RA(i,w)$. By (qNDMA$_4$) this implies that there exists a formula $\phi$ such that $M,v\models\phi$ and $M,u\models\phi\limply\rewa i$ for all $u\in E(i,w)$. As $v\in E(i,w)$, we obtain that $M,v\models\rewa i$. As $\rewa i\in\Sigma$ we conclude that $M',v\models\rewa i$ and $M',v\models\alpha$.

    Conversely, let $v\in W$ be such that $M',v\models\alpha$ for all $\alpha\in B'(i,w)$. Then in particular $M',v\models p_{i,w}$. By definition of $V'$ this implies that $v\in E(i,w)$.

    We move on to property (NDMA$_2$). If $v\in A(i,w)$ then by definition of $V'$ we have $M',v\models q_{i,w}$, and $q_{i,w}\limply\rewa i\in B'(i,w)$. Conversely,
    if there exists $\alpha\limply\rewa i\in B'(i,w)$ such that $M',v\models\alpha$ then there are two cases. If $\alpha\limply\rewa i\in B'(i,w)$ then $\alpha\in\Sigma$ and $M,v\models\alpha$, hence by (qNDMA$_2$) $v\in A(i,w)$. Otherwise $\alpha=q_{i,w}$ and by definition of $V'$ we also get that $v\in A(i,w)$.
\end{proof}

Finally, from a NDMA $M=(W,B,E,A,RA,V)$ we define the context $\iconstraint=\{\st^w\suchthat w\in W\}$, where $ \st^w = \big( (\belbaseset_i^w)_{i \in \AGT}, \stateval^w \big)$ is defined as follows:
\begin{align*}
    \belbaseset_i^w&=B(i,w);\\
    \stateval^w&=\{p\suchthat w\in V(p)\}.
\end{align*}

\begin{lemma}
For any formula $\phi$ and any $w\in W$, $(M,w)\models \phi$ iff $(\st^w,\iconstraint)\models\phi$.

\begin{proof}
    This is shown by induction on the structure of $\phi$. The cases of $\fraglang$ atoms and boolean operators are straightforward.

    Following this and using (NDMA$_1$), it is easy to prove that for any worlds $w$ and $v$, $\st^w\relstate i \st^w$ iff $v\in E(i,w)$, from which follows the case of $\impbel i\phi$.
    Similarly, using (NDMA$_2$) we get that $\st^w\relattr i \st^w$ iff $v\in A(i,w)$, from which follows the case of $\attract i{}\phi$.

    Suppose now that $(M,w)\models\attractreal i{}\phi$, and let $v\in W$ be such that $\st^v\models\phi$ and $\st^w\relstate i \st^v$. Then we have seen that $v\in E(i,w)$, and by the induction hypothesis $(M,v)\models\phi$. As $(M,w)\models\attractreal i{}\phi$ we get that $v\in RA(i,w)$, hence by (NDMA$_3$) $v\in A(i,w)$. Therefore $\st^w\relattr i \st^w$, and we conclude that $(\st^w,\iconstraint)\models\attractreal i{}\phi$.

    Conversely, if $(\st^w,\iconstraint)\models\attractreal i{}\phi$ and consider $v\in W$ such that $(M,v)\models\phi$. If $v\notin E(i,w)$ then $v\in RA(i,w)$ by (NDMA$_3$). Otherwise we have seen that $\st^w\relstate i \st^v$, and by the induction hypothesis $\st^v\models\phi$, hence $\st^w\relattr i \st^v$ and $v\in A(i,w)$. Therefore by (NDMA$_3$) $v\in RA(i,w)$. We conclude that $(M,w)\models\attractreal i{}\phi$.
\end{proof}
\end{lemma}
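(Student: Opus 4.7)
The plan is to prove this by induction on the structure of $\phi$, exploiting the tight correspondence between the NDMA $(W,B,E,A,RA,V)$ and the belief base context $\iconstraint = \{\st^w : w \in W\}$ built from it. The atomic and Boolean cases are immediate from the definitions $\belbaseset_i^w = B(i,w)$ and $\stateval^w = \{p : w \in V(p)\}$ together with Definition~\ref{satrel}.

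Before attacking the modal cases, I would establish two intermediate claims that drive everything else: (i) $\st^w \relstate{i} \st^v$ iff $v \in E(i,w)$, and (ii) $\st^w \relattr{i} \st^v$ iff $v \in A(i,w)$. For (i), Definition~\ref{DefAlternative} gives $\st^w \relstate{i} \st^v$ precisely when $\st^v \models \alpha$ for every $\alpha \in \belbaseset_i^w = B(i,w)$; since $\alpha \in \fraglang$ and truth of such $\alpha$ at $\st^v$ matches $v \in ||\alpha||$ (by the already-handled base cases of the induction), this is equivalent to $v \in \bigcap_{\alpha \in B(i,w)} ||\alpha||$, which by (NDMA$_1$) is exactly $E(i,w)$. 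Claim (ii) is analogous, using Definition~\ref{DefAttractiveRepulsiveAlternative} and (NDMA$_2$): $\st^w \relattr{i} \st^v$ reduces to the existence of some $\alpha \to \rewa{i} \in B(i,w)$ with $v \in ||\alpha||$, matching the union that defines $A(i,w)$.

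With (i) and (ii) in place, the cases $\impbel{i}\phi$ and $\attract{i}{}\phi$ fall out by unwinding the two semantic clauses and invoking the induction hypothesis on $\phi$. The crucial case is $\attractreal{i}{}\phi$: I would leverage (NDMA$_3$), namely $RA(i,w) = A(i,w) \cup \overline{E}(i,w)$, to bridge the two conditions. The NDMA semantics requires that every $v$ with $(M,v) \models \phi$ lies in $RA(i,w)$, while the context semantics requires that every $v$ with $\st^w \relstate{i} \st^v$ and $(\st^v,\iconstraint) \models \phi$ satisfies $\st^w \relattr{i} \st^v$. Using (i), (ii) and the induction hypothesis, these translate respectively to ``$(M,v) \models \phi$ implies $v \in A(i,w) \cup \overline{E}(i,w)$'' and ``$(M,v) \models \phi$ together with $v \in E(i,w)$ imply $v \in A(i,w)$'', which are propositionally equivalent.

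The main obstacle will be the $\attractreal{i}{}\phi$ case, where two differently shaped universal statements must be reconciled; (NDMA$_3$) is exactly the structural hypothesis needed to make them line up, and I would split the argument into the two directions of the iff to keep the quantifier scopes transparent. A minor secondary point is that the induction must be carried out on formulas of $\lang$ while the intermediate claims (i) and (ii) depend on the induction hypothesis being already available for the subformulas $\alpha \in \fraglang$ appearing in belief bases; this is legitimate since those $\alpha$ are structurally smaller than the outer modal formula, but it warrants an explicit remark to rule out circularity.
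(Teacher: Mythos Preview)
Your proposal is correct and follows essentially the same route as the paper: induction on $\lang$-formulas, establishing the two bridging claims $\st^w\relstate{i}\st^v \Leftrightarrow v\in E(i,w)$ and $\st^w\relattr{i}\st^v \Leftrightarrow v\in A(i,w)$ via (NDMA$_1$) and (NDMA$_2$), and then using (NDMA$_3$) for the $\attractreal{i}{}$ case. One small remark: your circularity worry is slightly miscast---the $\alpha$'s in belief bases need not be subformulas of $\phi$ at all, but they are $\fraglang$-formulas, and the equivalence for all of $\fraglang$ is established outright in the base case (directly from the definitions of $\st^w$), so no appeal to an induction hypothesis on smaller formulas is needed there.
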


\begin{proposition}
    If a formula $\phi$ is satisfiable in $\mathsf{LCA}$ then it is satisfiable in $\classbelbase$.
\end{proposition}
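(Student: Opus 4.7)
The plan is to chain together the four lemmas already established in this appendix, each of which transforms a model of one kind into a model of another kind while preserving satisfiability of the relevant formulas. Given a formula $\phi$ that is $\mathsf{LCA}$-consistent, I would first build a finite set $\Sigma$ that contains $\phi$ and every $\rewa{i}$, is closed under subformulas, and is closed under the finitely many axiom schemes listed before the canonical model definition (i.e., the instances of \ref{ax:A3}, \ref{ax:A4}, \ref{ax:A6}, \ref{ax:A8} and \ref{ax:A10} on subformulas already in $\Sigma$). Because each of these closure operations introduces only finitely many new formulas and is idempotent after finitely many rounds, $\Sigma$ can be chosen finite.

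Next I would invoke the three construction lemmas in order. Consistency of $\phi$ and Lindenbaum's lemma give a maximal $\mathsf{LCA}$-consistent set $X$ with $\phi \in X$, so the world $w_0 = X \cap \Sigma$ lies in $W_\Sigma$ and, by the truth lemma for the $\Sigma$-canonical model, $(M_\Sigma, w_0) \models \phi$; the same lemma guarantees $M_\Sigma$ is a quasi-NDMA. I would then apply the duplication construction producing $M_\Sigma'$, which preserves satisfaction of every formula (and hence of $\phi$) at the copies $(w_0)_1,(w_0)_2$ and yields a quasi-NDMA that additionally satisfies (NDMA$_3$). The third lemma expands the belief bases with the fresh atoms $p_{i,w}$ and $q_{i,w}\limply\rewa i$ to recover (NDMA$_1$) and (NDMA$_2$) while preserving satisfaction of all $\Sigma$-formulas, so $\phi$ still holds at $w_0$ in the resulting NDMA.

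Finally I would apply the last lemma that reads off, from any NDMA $M=(W,B,E,A,RA,V)$, the context $\iconstraint = \{\st^w \suchthat w \in W\}$ with $\st^w = \bigl((B(i,w))_{i \in \AGT}, \{p \suchthat w \in V(p)\}\bigr)$. Since that lemma asserts $(M,w)\models\psi$ iff $(\st^w,\iconstraint)\models\psi$ for \emph{every} formula $\psi$, in particular $(\st^{w_0},\iconstraint) \models \phi$, which witnesses satisfiability of $\phi$ in $\classbelbase$.

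The genuinely delicate point, and what I expect to be the only non-routine step, is choosing $\Sigma$ correctly at the outset: the canonical model's properties (qNDMA$_1$)--(qNDMA$_4$) are derived from axioms \ref{ax:A3}, \ref{ax:A4}, \ref{ax:A6}--\ref{ax:A9} and \ref{ax:A10}--\ref{ax:A11}, and each such derivation in the corresponding lemma applies those axioms to formulas that must themselves lie in $\Sigma$. One must therefore verify that the closure of $\{\phi\} \cup \{\rewa{i},\punis{i} \suchthat i \in \AGT\}$ under subformulas and under the axiom-induced implications remains finite and provides exactly the implicational closure used by the truth lemma; once this bookkeeping is handled, the remainder of the proof is a straightforward composition of the three preceding lemmas, evaluated at the starting world $w_0$.
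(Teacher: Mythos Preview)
Your proposal is correct and follows essentially the same route as the paper's own proof: build a suitable finite $\Sigma$, use the truth lemma for the $\Sigma$-canonical quasi-NDMA, then apply the duplication lemma, the fresh-atom expansion lemma, and the final NDMA-to-context lemma in sequence, tracking the world $w_0$ throughout. Your identification of the choice and finiteness of $\Sigma$ as the only delicate point is apt (the paper itself merely asserts that such a finite $\Sigma$ exists without spelling out the closure argument).
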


\begin{proof}
    Let $\phi$ be a formula satisfiable in $\mathsf{LCA}$. We take as $\Sigma$ the set of subformulas of $\phi$ augmented with all $\rewa i$ and closed under the relevant implications. Then $\phi$ is satisfied in some world $w$ of the $\Sigma$-canonical model, hence it is also satisfied in the two expansions we have defined, and finally it is satisfied in the corresponding model $(\st^w,\iconstraint)$.
\end{proof}

\subsection{Proof  of Proposition \ref{prop:traduction}}\label{sec:proofPropTraduction}

\begin{proof}
\newcommand{\valuationdescr}[2]{\operatorname{val}_{#1}(#2)}
Let $\valuationdescr{\st}{\setQBFvariableslevel k}$
represent the unique valuation on $X_k$ satisfying $\descr{\st}{\setQBFvariableslevel k}$.
We prove by induction on the structure of $\phi$ that
$%
    (\st,\setbelbasecustom) \models \phi
    \text{ iff }
    \valuationdescr{\st}{\setQBFvariableslevel k} \models tr_\QBFsymbol(\phi)
$,
for all $k$.

Induction base.
Let $\phi = p$, for some $p \in \atmset$.
We have
$(\st,\setbelbasecustom) \models p$
iff
$p \in V$
iff    
$x_{k, p} \in \valuationdescr{\st}{\setQBFvariableslevel k}$
iff
$\valuationdescr{\st}{\setQBFvariableslevel k} \models tr_\QBFsymbol(p)$.

Induction step.
The cases for operators $\lnot$ and $\land$ are straightforward.
We proceed with the modal operators in the language:
\begin{itemize}
\item Let $\phi = \expbel i \alpha$.
    We have:
    $(\st,\setbelbasecustom) \models \expbel i \alpha$
    iff
    $\alpha \in B_i$
    iff
    $x_{k, \expbel i \alpha} \in \valuationdescr{\st}{\setQBFvariableslevel k}$
    iff
    $\valuationdescr{\st}{\setQBFvariableslevel k} \models tr_\QBFsymbol(\expbel i \alpha)$.
    
\item
    Let $\phi = \impbel i \psi$.
    We denote by $\valuationdescr{\st}{\setQBFvariableslevel {\QBFsymbol}} + \valuationdescr{\st'}{\setQBFvariableslevel {\QBFfreshsymbol}}$ the valuation obtained by concatenating the valuation $\valuationdescr{\st}{\setQBFvariableslevel {\QBFsymbol}}$ and $\valuationdescr{\st'}{\setQBFvariableslevel {\QBFfreshsymbol}}$ (we take the truth values of propositions in $\setQBFvariableslevel {k}$ from the former and the truth values of propositions in $\setQBFvariableslevel {\QBFfreshsymbol}$ from the latter).
    We have:
\begin{align*}
    (\st,\setbelbasecustom) \models  \impbel i \psi
    \iff & 
    \text{ for all $\st' \in \setbelbasecustom$, $\st \relepist{i} \st'$ implies } \\
    & (\st',\setbelbasecustom) \models  \psi
    \\
    \iff & 
    \text{ for all $\st' \in \setbelbasecustom$, $\st \relepist{i} \st'$ implies } \\ 
    & \valuationdescr{\st'}{\setQBFvariableslevel {\QBFfreshsymbol}} \models tr_{\QBFfreshsymbol}(\psi)
    \\
    \iff & 
    \text{ for all $\st' \in \setbelbasecustom$} \\
    & 
    \valuationdescr{\st}{\setQBFvariableslevel {\QBFsymbol}} + \valuationdescr{\st'} {\setQBFvariableslevel {\QBFfreshsymbol}} \\ 
    & ~~~~~~\models R_{i,\QBFsymbol} \limply  tr_{k+1}(\psi) \\
    \iff & 
    \valuationdescr{\st}{\setQBFvariableslevel {\QBFsymbol}} \models \forall \setQBFvariableslevel {\QBFfreshsymbol}, R_{i,\QBFsymbol} \limply  tr_{\QBFfreshsymbol}(\psi) \\
    \iff & 
    \valuationdescr{\st}{\setQBFvariableslevel \QBFsymbol} \models tr_\QBFsymbol(\impbel i \alpha)
\end{align*}
   \item The other cases are analogous.
\end{itemize}

Therefore, $(\st_0,\setbelbasecustom) \models \phi_0$ iff  $\valuationdescr{\st_0}{\setQBFvariableslevel 0} \models tr_0(\phi)$.
In addition,
$\valuationdescr{\st_0}{\setQBFvariableslevel 0} \models tr_0(\phi)$ is equivalent to $\exists \setQBFvariableslevel0 (
            \descr{\st_0}{\setQBFvariableslevel0} \land tr_0(\phi_0)
        )$ is QBF-true.
This concludes the proof.
\end{proof}

\subsection{Proofs of the other  propositions}\label{sec:proofProp}

We prove validity  (\ref{motdemot1}) in  Proposition \ref{prop:motdemot}.
The proof
of validity  (\ref{motdemot2}) is analogous. 

\subsubsection{Proof of validity  (\ref{motdemot1}) in  Proposition \ref{prop:motdemot} }
\begin{proof}
Suppose $ (\st,\iconstraint) \models\motiv{i}\phi   $
for an arbitrary model $ (\st,\iconstraint)$.
Thus, by definition of $\motiv{i}$, 
$ (\st,\iconstraint)  \models  \neg \repuls{i}{} \phi$.
Hence, by definition of $\demotiv{i}$, 
$ (\st,\iconstraint)  \models  \neg \demotiv{i}\phi$.
\end{proof}

Then, we 
prove the three validities 
in  Proposition  \ref{prefprop}.
We only prove the case
$\jokertwo =\prec_i$.
The proof of the  case 
$\jokertwo = \prec_i^{\mathsf{real}}$
is 
analogous.

\subsubsection{Proof of validity  (\ref{propirr}) in  Proposition \ref{prefprop}}
\begin{proof}
Clearly, $\neg (\motiv{i}\varphi  \wedge  \neg \motiv{i}\varphi) $
is valid. Thus, by definition of 
$\prec_i$, 
$\neg  (\phi   \prec_i  \phi )$
is valid too.
\end{proof}

\subsubsection{Proof of validity  (\ref{propasy}) in Proposition \ref{prefprop}}
\begin{proof}
Suppose $ (\st,\iconstraint) \models\psi   \prec_i  \phi  $
for an arbitrary model $ (\st,\iconstraint)$.
By definition of $\prec_i$,
two cases are possible.

\textbf{Case $ (\st,\iconstraint) \models
\motiv{i}\varphi  \wedge  \neg \motiv{i}\psi $.}
By item (\ref{motdemot1}) of Proposition \ref{prop:motdemot},
it follows that 
$ (\st,\iconstraint) \models
\motiv{i}\varphi  \wedge  \neg \motiv{i}\psi
\wedge \neg 
\demotiv{i}\varphi  $. Thus,  by definition of $\prec_i$,
$(\st,\iconstraint) \models \neg  (\phi   \prec_i  \psi)$.

\textbf{Case $ (\st,\iconstraint) \models
\demotiv{i}\psi   \wedge  \neg \demotiv{i}\varphi  $.}
By item (\ref{motdemot1}) of Proposition \ref{prop:motdemot},
it follows that 
$ (\st,\iconstraint) \models
\demotiv{i}\psi   \wedge  \neg \demotiv{i}\varphi
\wedge \neg 
\motiv{i}\psi   $. Thus, again by definition of $\prec_i$,
$(\st,\iconstraint) \models \neg  (\phi   \prec_i  \psi)$.
\end{proof}

\subsubsection{Proof of validity  (\ref{proptrans}) in  Proposition \ref{prefprop}}
\begin{proof}
Suppose $ (\st,\iconstraint) \models(\phi_1   \prec_i  \phi_2 )
\wedge (\phi_2   \prec_i  \phi_3 )$
for an arbitrary model $ (\st,\iconstraint)$.
By definition of $\prec_i$,
four cases should be considered. 

\textbf{Case $ (\st,\iconstraint) \models
\neg \motiv{i}\phi_1  \wedge  \motiv{i}\phi_2
\wedge \neg \motiv{i}\phi_2  \wedge   \motiv{i}\phi_3 $.}
This case is not possible since 
$ \motiv{i}\phi_2
\wedge \neg \motiv{i}\phi_2$ 
is unsatisfiable.

\textbf{Case $ (\st,\iconstraint) \models
 \demotiv{i}\phi_1  \wedge \neg  \demotiv{i}\phi_2
\wedge  \demotiv{i}\phi_2  \wedge \neg   \demotiv{i}\phi_3 $.}
This case is not possible since 
$  \neg  \demotiv{i}\phi_2
\wedge  \demotiv{i}\phi_2$ 
is unsatisfiable.

\textbf{Case $ (\st,\iconstraint) \models
\neg \motiv{i}\phi_1  \wedge  \motiv{i}\phi_2
\wedge  \demotiv{i}\phi_2  \wedge  \neg \demotiv{i}\phi_3 $.}
This case is not possible since
$ \motiv{i}\phi_2
\wedge  \demotiv{i}\phi_2 $ 
is unsatisfiable
due to item (\ref{motdemot1}) of Proposition \ref{prop:motdemot}.

\textbf{Case $ (\st,\iconstraint) \models
 \demotiv{i}\phi_1  \wedge \neg  \demotiv{i}\phi_2
\wedge  \neg \motiv{i}\phi_2  \wedge  \motiv{i}\phi_3 $.}
  By item (\ref{motdemot1}) of Proposition \ref{prop:motdemot},
  it implies 
  $ (\st,\iconstraint) \models
\neg  \motiv{i}\phi_1  \wedge  \motiv{i}\phi_3 $.
Thus, by definition of $ \prec_i$, 
$ (\st,\iconstraint) \models
\phi_1  \prec_i  \phi_3 $. 
\end{proof}

\newpage 

\section{Appendix II: Dynamic extension with revision}\label{sec:revision}

\newcommand{\beliefbase}[1]{B_{#1}}

This  appendix
contains the technical
details
of the dynamic extension
of the language
$\langdyn$
with belief
revision operators
of type  $*_i \alpha$, 
as well as  the reduction
of its model checking problem into TQBF, 
we mentioned in the conclusion of the paper. 

\subsection{Semantics}
The semantics of the revision operator $[*_i \alpha]$ is given by:
\begin{align*}
(\st,\iconstraint)\models [*_i \alpha] \phi & \text{ iff }
(\st',\iconstraint)\models  \phi
\end{align*}
where $\st'$ is a state where the valuation is the same as in $\st$, i.e., $V' = V$;
the belief bases of all agents different from $i$ are the same as in $\st$, i.e., $\beliefbase j' = \beliefbase j$, for $j \neq i$;
and $\beliefbase i'$ is the intersection of the maximal consistent sets (MCS) included in $\beliefbase i \union \set{\alpha}$, which we call $\alpha$-MCS.

\subsection{Abbreviations}

The QBF translation approach can still be applied. To this aim, we introduce several abbreviations.
In what follows, let $B_i(s)$ be the $i$'s belief base in the state represented by the symbol $s$.

\newcommand{\imcs}{\operatorname{IMCS}}

\begin{itemize}

\newcommand{\included}{incl}
\item The formula $\included_i(X_s, X_{s'})$ defined below says that the set $B_i(s)$ is included in $B_i(s')$:
\[
    \included_i(X_s, X_{s'}) := \lbigand_{\alpha \in \Gamma_i} x_{\expbel i \alpha, s} \limply x_{\expbel i \alpha, s'}
\]

\newcommand{\strictlyincluded}{strincl}
\item The formula $\strictlyincluded_i(X_s, X_{s'})$ defined below says that $B_i(s)$ is strictly included in $B_i(s')$:
\begin{align*}
    \strictlyincluded_i(X_x, X_{s'}) := \;
    & \included_i(X_s, X_{s'}) \;\land \\
    & \lnot \lbigand_{\alpha \in \Gamma_i} (x_{\expbel i \alpha, s} \lequiv x_{\expbel i \alpha, s'})
\end{align*}

\newcommand{\consist}{cons}
\item The formula $\consist_i(X_s)$ defined below says that $B_i(s)$ is consistent:
\[
    \consist_i(X_s) := tr_i(\lnot \impbel i \bot)
\]

\newcommand{\includedminusalpha}{inclminus}
\item The formula $\includedminusalpha_i(\alpha, X_{s}, X_{s'})$ says that the set
$B_i(s) \setminus \set{\alpha}$ is included in $B_i(s')$
(which is equivalent to say that $B_i(s)$ is included in $B_i(s') \union \set{\alpha}$):
\[
    \includedminusalpha_i(\alpha,X_s,X_{s'}) 
    :=
    \lbigand_{\beta \in \Gamma_i, \beta \neq \alpha} (x_{\expbel i \beta, {s}} \limply x_{\expbel i \beta, {s'}})
\]

\newcommand{\mcs}{\operatorname{MCS}}

\item Now, consider fresh symbols $M$ and $M'$ representing states (as well as $s$, $s'$, etc., but with the difference that $M$ and $M'$ are not symbols that denote states that are present in the model).
The formula $\mcs_i(X_M, X_s, \alpha)$ below says that $B_i(M)$ is a maximal consistent set included in $B_i(s) \union \set{\alpha}$:
\begin{align*}
    \mcs_i(X_M,X_s,\alpha) & := x_{\expbel i \alpha, M} \;\land\\
    & \includedminusalpha_i(\alpha, X_M, X_s) \;\land\\
    & \consist_i(X_M) \;\land\\
    & \forall X_{M'} \begin{aligned}[t]
        ( \; & \strictlyincluded_i(X_M, X_{M'}) \;\land \\
        & \includedminusalpha_i(\alpha,X_{M'}, X_s)\\
        ) & \limply \lnot \consist_i(X_{M'}) 
    \end{aligned}
\end{align*}

The first line of the formula above says that $B_i(M)$ contains $\alpha$;
the second line says that $B_i(M)$ is included in $B_i(s) \union \set{\alpha}$;
the third line says that $B_i(M)$ is consistent;
and
the remainder of the formula says that any proper superset $B_i(M')$ of $B_i(M)$ included in $B_i(s) \union \set{\alpha}$ is inconsistent.

\item Finally, the formula $\imcs_i(X_s, X_{s'}, \alpha)$ below says that $B_i(s')$ equals to the intersection of all $\alpha$-MCS:
\begin{align*}
    & \imcs_i(X_s, X_{s'}, \alpha) := \\
    & \quad ( \forall X_M (\mcs_i(X_M, X_s, \alpha) \limply \included_i(X_{s'}, X_M) ) \;\land \\
    & \quad \lbigand_{\beta \in \Gamma_i} (\forall X_M (\mcs_i(X_M, X_s, \alpha) \limply x_{\expbel i \beta, M})) \limply x_{\expbel i \beta, s'}
\end{align*} 

The first line of the formula above says that any $\alpha$-MCS includes $B_i(s')$.
The second line is the other direction, i.e., if $\beta$ is in every $\alpha$-MCS then $\beta$ is in $B_i(s')$.
\end{itemize}

\subsection{Modification of the translation function}

Now we can add the following clause to the definition of $\trrel \QBFsymbol \QBFsymbolb$:
\begin{align*}
    \trrel  \QBFsymbol \QBFsymbolb (*_i \alpha) := \;
    & \sameprop\QBFsymbol\QBFsymbolb \;\land \\
    & \lbigand_{j \neq i} \samej\QBFsymbol\QBFsymbolb \;\land \\
    & \imcs_i(X_s, X_{s'}, \alpha) 
\end{align*}
The first line of the formula above says that the valuations in $s$ and $s'$ are the same;
the second line says that, except for agent $i$, all agent's belief bases are the same in $s$ and $s'$;
and
the third line says that $B_i(s')$ is the intersection of all $\alpha$-MCS.
}

\end{document}